\newcommand{\A}{\mathcal{A}}
\newcommand{\B}{\mathcal{B}}
\newcommand{\M}{\mathcal{M}}
\renewcommand{\P}{\mathcal{P}}
\newcommand{\R}{\mathcal{R}}
\renewcommand{\S}{\mathcal{S}}
\newcommand{\T}{\mathcal{T}}
\newcommand{\X}{\mathcal{X}}
\newcommand{\pr}[1]{\mathbb{P}\left\{ #1 \right\}}
\newcommand{\KL}{\mathrm{KL}}
\newcommand{\dTV}{\mathrm{d}_{\mathrm{TV}}}
\newcommand{\dM}{\mathrm{d}_{\mathcal{M}}}
\newcommand{\dPi}{\mathrm{d}_\Pi}
\newcommand{\opt}{\mathrm{opt}}
\newcommand{\prop}{\mathrm{prop}}
\newcommand{\abs}[1]{\halfspace \vert  #1 \vert}            
\newcommand{\bigabs}[1]{\halfspace \Big \vert \halfspace #1 \Big \vert}
\newcommand{\norm}[1]{{\|#1\|}}                             
\newcommand{\bignorm}[1]{\big\| #1 \big\|}
\newcommand{\halfspace}{\kern 0.2em}
\newcommand{\qspace}{\kern 0.1em}
\newcommand{\edit}[1]{\textcolor{black}{#1}}
\theoremstyle{plain}
\newtheorem{theorem}{Theorem}
\newtheorem{lemma}{Lemma}
\newtheorem{assumption}{Assumption}
\newtheorem{definition}{Definition}
\title{\LARGE \bf
Shaping Large Population  Agent Behaviors Through\\ Entropy-Regularized Mean-Field Games
}
\author{Yue Guan$^{1}$,  Mi Zhou$^{2}$, Ali Pakniyat$^{3}$, and Panagiotis Tsiotras$^{4}$
\thanks{$^{1}$ Yue Guan is a PhD student with the School of Aerospace Enginnering, Georgia Institute of Technology, Atlanta, GA, USA
{\tt\small yguan44@gatech.edu}
}
\thanks{$^{2}$Mi Zhou is a PhD student with the School of Electrical and Computer Engineering, Georgia Institute of Technology, Atlanta, GA, USA. 
{\tt\small mzhou91@gatech.edu}}
\thanks{$^{3}$Ali Pakniyat is an Assistant Professor with the department of Mechanical Engineering, University of Alabama, Tuscaloosa, AL, USA.
{\tt\small apakniyat@ua.edu}}
\thanks{$^{4}$Panagiotis Tsiotras is the David \& Andrew Lewis Chair Professor with the School of Aerospace Engineering, Georgia Institute of Technology, Atlanta, GA, USA.
{\tt\small tsiotras@gatech.edu}}
}
\begin{document}

\renewcommand{\baselinestretch}{0.98}

\maketitle
\thispagestyle{empty}
\pagestyle{empty}

\begin{abstract}
Mean-field games (MFG) were introduced to efficiently analyze approximate Nash equilibria in large population settings.
In this work, we consider entropy-regularized mean-field games with a finite state-action space in a discrete time setting. 
We show that entropy regularization provides the necessary regularity conditions, that are lacking in the standard finite mean field games.
Such regularity conditions enable us to design fixed-point iteration algorithms to find the unique mean-field equilibrium (MFE).
Furthermore, the reference policy used in the regularization provides an extra parameter, through which one can control the behavior of the population.
We first consider a stochastic game with a large population of $N$ homogeneous agents. 
We establish conditions for the existence of a Nash equilibrium in the limiting case as $N$ tends to infinity, and we demonstrate that the Nash equilibrium for the infinite population case is also an $\epsilon$-Nash equilibrium for the $N$-agent system, where the sub-optimality $\epsilon$ is of order $\mathcal{O}\big(1/\sqrt{N}\big)$.
Finally, we verify the theoretical guarantees through a resource allocation example and demonstrate the efficacy of using a reference policy to control the behavior of a large population. 
\end{abstract}

\section{Introduction}

Decision making in decentralized systems arises in many applications, ranging from
multi-robot task allocation~\cite{berman2009optimized, korsah2013comprehensive,shishika2021dynamic,zhou2021game}, 
finance~\cite{lehalle2019mean, fu2021mean,lachapelle2016efficiency}, etc.
The scalability of the solution to large populations is an important consideration in these settings, as the complexity of the system increases drastically with the number of agents. 

To address the scalability issues, the mean field approach was introduced in~\cite{huang2006large, huang2007large, lasry2007mean}. 
The mean-field game (MFG) formulation reduces the interactions among agents to a game between a \textit{representative agent} and a population of infinitely many other agents.
Such a population is often referred to as the \textit{mean field}, 
and the solution in this limiting case is the mean-field equilibrium (MFE).
In the continuous setting, the MFE is characterized by a Hamilton-Jacobi-Bellman equation (HJB) coupled with a transport equation.
The HJB equation describes the optimality conditions for the policy of the representative agent, and the transport equation captures the evolution of the population distribution. 
Furthermore, the optimal policy computed by the representative agent constitutes an $\epsilon$-Nash equilibrium when all the agents in the \textit{finite} $N$-population deploy this policy, for some sufficiently large $N$.
The existence and uniqueness of such an optimal policy have been established in~\cite{huang2006large}.

Although the discretization of continuous MFG has been studied in prior works~\cite{achdou2010mean}, direct analysis results for discrete-time and finite state-action space MFG are still relatively sparse. 
One of the challenges in the finite MFG is the absence of regularity conditions regarding the mean field~\cite{Guo2019LearningMG}. 
That is, when the population mean field changes slightly, the corresponding optimal policy for the representative agent could change drastically~\cite{cui2021approximately}.
Previous works have used Boltzmann policies~\cite{yang2018mean} and projection to meshed probability measure spaces~\cite{Guo2019LearningMG} to avoid such issues.
More recent works~\cite{anahtarci2020q} directly introduced a relative entropy term to the reward structure to provide regularity conditions.
The authors in~\cite{cui2021approximately} used entropy-regularization to stabilize the iterative algorithm and reduced regularization over time to learn the original MFE.
The existence and uniqueness of the entropy-regularized MFE was also examined.

Different from these previous works, in this paper we explicitly consider the reference policy in the entropy-regularization as an extra feature that allows us to control the behavior of a large population.
Consider the situation, for instance, where a ``coordinator'' of a large population of agents desires to impose a certain group behavior, but it does not have access to the actual rewards. 
The agents are selfish and not concerned about the overall performance of the population, but they have access to the actual rewards. 
If the coordinator designs a policy and forces the whole population to adopt it, the result could be undesirable, as such a policy will not be informed of the actual agent rewards. 
Without a reference policy, however, agents may fail to find the MFE, or they may find an MFE that does not induce a desirable group behavior.
We argue that the entropy-regularized MFG is a good framework to model such scenarios.
Through a resource allocation example, we show that one can encode desired population behaviors into the reference policy.  
By adjusting the multiplier of the regularization term, one can further produce a tunable behavior of the population that balances the encoded behavior and the cumulative rewards.

\textit{Contributions:}
In this work, we formulate a game of $N$-homogeneous agents and show that under pairwise coupled rewards, the state of the system can be \textit{exactly} represented by a distribution over the state space.
We then consider the limiting infinite population game and introduce entropy regularization to construct contraction operators to find the unique regularized MFE. 
We consider a special class of MFGs where the agents have coupled rewards but decoupled dynamics.
For this class of MFGs, we streamline and simplify the convergence proof in~\cite{cui2021approximately}. 
Finally, we verify the theoretical results through a numerical example for a resource allocation problem and demonstrate that certain performance is not possible without entropy-regularization and a properly selected reference policy.



\section{Problem Formulation}

Consider a large population game consisting of $N$ homogeneous agents, where  $N\gg1$.
We define the game through the tuple $\langle \S, \A, \T, \R^1, \ldots, \R^N, N, H \rangle$. 
The game is over a discrete-time with a finite horizon~$H$.
In this formulation, we assume that all agents share the same finite state space $\S$ and the same finite action space $\A$.
At time $t$, agent $i$ takes an action $a_t^i \in \A$ and transitions from state $s^i_t$ to $s^i_{t+1}$ according to the dynamics~$\T$, which we will discuss in greater detail later on.
As a consequence of its own action, agent $i$ receives a reward $\R^i_t(s^i_t,a^i_t,s^{-i}_t)$, where $s^{-i}_t$ is a shorthand notation for $(s^1_t,\ldots,s^{i-1}_t,s^{i+1}_t,\ldots,s^N_t)$.
Each agent follows a (time-varying) Markov policy $\pi^i = \{\pi^i_t\}_{t=0}^{H}$, 
such that, at each time step $t$, 
this policy is a mapping $\pi^i_t: \S \times \A \to [0,1]$.
We use $\Pi$ to denote the space of admissible policies. 
For simplicity, we use $t\leq H$ to denote that $t \in \{0,\ldots, H\}$.


\noindent 
\paragraph{Dynamics}
We assume that all agents have the same decoupled dynamics $\T : \S \times \S\times \A \to [0,1]$.
The value of $\T(s_{t+1}|s_{t},a_t)$ represents the probability of transitioning from state $s_t$ to state $s_{t+1}$ under action $a_t$.\footnote{%
\edit{We use $s^i\in \S$ to denote the state of the specific agent $i$, and $s \in \S$ to denote the state of a generic agent.
Similar rules apply to $a^i$ and $a$.}}
%
In the sequel, we use the notation $\T(\cdot|\cdot,\pi_t^i)$ to denote the dynamics of agent $i$ following the policy $\pi^i_t$.
Formally, 
\begin{equation}\label{eqn:policy-induced-transition}
    \T(s^i_{t+1}|s^i_t,\pi_t^i) = \sum_{a^i_t\in\A} \T(s^i_{t+1}|s^i_t, a^i_t) \halfspace \pi_t^i(a^i_t|s^i_t).
\end{equation}

Assuming that all agents start with the same initial state distribution $\mu_0$ and that each agent $i$ deploys a policy $\pi^i$,
we have $N$ independent processes, where the $i$-th process follows the dynamics
\begin{equation} \label{eqn:n-processes}
         s_{t+1}^i \sim \T\big(\cdot|s^i_{t}, \pi^i_{t}\big), 
         ~~~~ t = 0,\ldots, H-1, \qquad s_0^i \sim \mu_0.
\end{equation}

\paragraph{Rewards}
The reward of agent $i$ at time $t$ is
\begin{equation}\label{eqn:rewards}
    \R^i_t(s^i_t, a^i_t, s^{-i}_t) = \Theta_t \Big(\frac{1}{N} \sum_{k=1}^N L_t(s^i_t, a^i_t, s^k_t) \Big),
\end{equation}
where $\Theta_t:\mathbb{R} \to \mathbb{R}$ and $L_t :\S \times \A \times \S \to \mathbb{R} $.
%
\begin{assumption} \label{assmpt:reward-lip}
    The function $\Theta_t$ is uniformly globally Lipschitz continuous in $t$ with Lipschitz constant $K_\Theta$. 
    That is, for all $x,y \in \mathbb{R}$ and $t \leq H$,
$
        |\Theta_t(x) - \Theta_t(y) | \leq K_\Theta \abs{x-y}.
$
\end{assumption}

\begin{assumption} \label{assmpt:reward-bound}
There exists a positive $L_{\max}$ such that, for all $s,s'\in \S, a\in \A \text{ and } t \leq H$,
$|L_t(s,a,s')| \leq L_{\max}.$
\end{assumption}

The maximum magnitude of the reward is
$\R_{\max} =\max_{|x|\leq L_{\max}, t\leq H} |\Theta_t(x)|$.
Note that the reward in~\eqref{eqn:rewards} is indifferent to the ordering of the agents.
As a consequence, agent $i$'s reward can be computed, given only the fraction of agents at each state.
This observation motivates the \textit{aggregation} of the system state $(s^1_t,\ldots, s^N_t)$ to a distribution of the agents' states.

\paragraph{Empirical distribution}

For the $N$ processes in~\eqref{eqn:n-processes}, we define the empirical distribution at time $t$ as
\begin{equation}\label{eqn:empirical-dist}
    \mu^N_t  (s) = \frac{1}{N} \sum_{k=1}^{N} \mathds{1}_s (s^k_t),
    \quad s \in \S,
\end{equation}
where $\mathds{1}_x$ is the indicator function, i.e., $\mathds{1}_x (y) = 1$ if $y=x$, and $0$ otherwise.
The empirical distribution flow is then defined as $\mu^N = \{\mu^N_t\}_{t=0}^H$.
Note that $\sum_{s}\mu^N_t(s) = 1$, and thus $\mu^N_t$ is a probability measure over $\S$.
We denote the space of probability measures over $\S$ as $\P(\S)$.
Then, $\M = \big(\P(\S)\big)^{H+1}$ is the space of the probability measure flows and $\mu^N \in \M$.

\paragraph{Metric spaces}
We use total variation~\cite{grimmett2001probability} as the metric for the probability measure space $\P(\X)$.
When $\X$ is finite, the total variation between $\nu,\nu' \in \P(\X)$ is given by
$
    \dTV(\nu, \nu') = \frac{1}{2} \sum_{x\in\X} \abs{\nu(x)-\nu'(x)} = \frac{1}{2} \bignorm{\nu(x)-\nu'(x)}_1.
$
We equip both $\M$ and $\Pi$ with the supremum metric induced by the total variation. 
That is, for $\mu, \mu' \in \M$, we define
$
    \dM(\mu,\mu') = \max_{t\leq H} \; \dTV \; \left(\mu_t, \mu'_t\right),
$
and for policies $\pi,\pi' \in \Pi$
$
    \dPi(\pi,\pi') = \max_{t\leq H} \max_{s\in\S} \; \dTV \left(\pi_t (s), \pi'_t(s)\right),
$
where $\pi_t(s) \in \P(\A)$ is the distribution the policy assign over actions when an agent is at state $s$. 
It can be shown that both $(\M,\dM)$ and $(\Pi, \dPi)$ are complete metric spaces~\cite{grimmett2001probability}.

\paragraph{Distribution induced rewards}

Due to symmetry, the state-coupled reward in~\eqref{eqn:rewards} can be characterized through the empirical distribution. 
Overloading the notation, we also write
\begin{align*}
    L_t(s^i_t, a^i_t,\mu_t^N)
    &\overset{\Delta}{=} \sum_{s'\in \S} L_t(s^i_t, a^i_t,s')\mu_t^N(s') \label{eqn:reward-mu}\\
    &= \sum_{s' \in \S} L_t(s^i_t, a^i_t,s') 
    \Big(\frac{1}{N} \sum_{k=1}^{N} \mathds{1}_{s'} (s^k_t)\Big)\\
    &= \frac{1}{N} \sum_{k=1}^{N} \Big(\sum_{s' \in \S} L_t(s^i_t, a^i_t,s') 
     \mathds{1}_{s'} (s^k_t)\Big)\\
    &=\frac{1}{N}\sum_{k=1}^N L_t(s^i_t, a^i_t, s^k_t).
\end{align*}
With the above definition, we further define the distribution-induced reward for each agent as
\begin{equation}\label{eqn:rewards-empirical-dist}
\begin{aligned}
        \R^i_t(s^i_t, a^i_t, \mu^N_t) & \overset{\Delta}{=}
        \Theta_t \Big(L_t(s^i_t, a^i_t,\mu_t^N) \Big).
\end{aligned}
\end{equation}



\begin{lemma}\label{lmm:reward-continuity}
    The reward function $\R^i_t(s,a,\mu)$ in (\ref{eqn:rewards-empirical-dist}) is globally Lipschitz with respect to the probability measure $\mu \in \P(\S)$, with Lipschitz constant $2K_\Theta \halfspace L_{\max}$. 
\end{lemma}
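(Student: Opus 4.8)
The plan is to exploit the composition structure $\R^i_t(s,a,\mu) = \Theta_t\big(L_t(s,a,\mu)\big)$: the outer map $\Theta_t$ is Lipschitz by Assumption~\ref{assmpt:reward-lip}, while the inner map $\mu \mapsto L_t(s,a,\mu)$ is \emph{linear} in $\mu$ with uniformly bounded coefficients by Assumption~\ref{assmpt:reward-bound}. Fixing $s\in\S$, $a\in\A$, and $t\leq H$, and taking arbitrary $\mu,\mu'\in\P(\S)$, I would first invoke the Lipschitz property of $\Theta_t$ to reduce the claim to a bound on the inner terms:
\[
    \abs{\R^i_t(s,a,\mu) - \R^i_t(s,a,\mu')} \leq K_\Theta \, \abs{L_t(s,a,\mu) - L_t(s,a,\mu')}.
\]
Because the resulting constant will not depend on the fixed $s,a,t$, the bound is automatically uniform, giving the ``global'' Lipschitz claim with respect to $\mu$.

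Next I would expand the inner difference through the definition $L_t(s,a,\mu) = \sum_{s'\in\S} L_t(s,a,s')\,\mu(s')$, then apply the triangle inequality together with the uniform bound $\abs{L_t(s,a,s')} \leq L_{\max}$ from Assumption~\ref{assmpt:reward-bound}:
\[
    \abs{L_t(s,a,\mu) - L_t(s,a,\mu')} \leq \sum_{s'\in\S} \abs{L_t(s,a,s')}\,\abs{\mu(s')-\mu'(s')} \leq L_{\max}\,\bignorm{\mu-\mu'}_1.
\]
Finally I would convert the $\ell_1$ distance into total variation via the identity $\bignorm{\mu-\mu'}_1 = 2\,\dTV(\mu,\mu')$ used in the metric definitions, and chain the two estimates to obtain the Lipschitz constant $2K_\Theta L_{\max}$.

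This argument presents no genuine obstacle; it is a routine chained estimate combining the two standing assumptions. The only point that requires care is the bookkeeping of the factor of two: it arises solely from the total-variation normalization $\dTV = \tfrac12\norm{\cdot}_1$ and is precisely what upgrades the naive constant $K_\Theta L_{\max}$ to the stated $2K_\Theta L_{\max}$. I would therefore make sure to track this normalization explicitly when transcribing the final inequality.
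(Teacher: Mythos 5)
Your proposal is correct and follows essentially the same route as the paper: bound the inner difference $\abs{L_t(s,a,\mu)-L_t(s,a,\mu')}$ by $2L_{\max}\,\dTV(\mu,\mu')$ via linearity and the uniform bound from Assumption~\ref{assmpt:reward-bound}, then compose with the $K_\Theta$-Lipschitz map $\Theta_t$. The paper leaves the inner estimate as ``one can verify''; your version simply fills in that computation, including the factor of two from the total-variation normalization.
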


\begin{proof}
One can verify that, for all $\nu,\nu' \in \P(\S)$,
\begin{equation*}
    | L_t(s, a,\nu) - L_t(s, a,\nu') | \leq 2 L_{\max} \dTV\left(\nu,\nu'\right).
\end{equation*}
Then, through composition with Lipschitz function $\Theta_t$, the desired Lipschitz constant for $\R^i_t(s,a,\nu)$ can be shown.
\end{proof}

\paragraph{Expected cumulative reward}

The expected cumulative reward of agent $i$ induced by the joint policies 
$(\pi^1,\ldots, \pi^N)$ is given by 
\begin{equation}\label{eqn:cumulative-reward}
\begin{aligned}
J^{i,N} (\pi^i, \pi^{-i}) 
=  \mathbb{E}\Big[\sum_{t=0}^{H} \R^i_t(s^i_t, a^i_t, \mu^N_t) \Big],
\end{aligned}
\end{equation}
where the expectation is taken over the system trajectories with each agent $i$ starting with initial distribution~$\mu^N_0$ and following the policy $\pi^i$.
Each agent's objective is to select a policy that maximizes its own expected cumulative rewards.
We therefore have $N$ coupled optimization problems:
\begin{equation}
    \max_{\pi^i \in \Pi} ~ J^{i,N} (\pi^i, \pi^{-i}), \qquad  i = 1,\ldots,N.
\end{equation}

One of the most common solution concepts for a game with such coupled optimization is the Nash equilibrium~\cite{owen:Game-Theory}. 

\begin{definition}
A Nash equilibrium (NE) is a tuple $(\pi^{1*},\ldots, \pi^{N*})$ such that, for all $i =1,\ldots, N$, 
\begin{equation*}
J^{i,N} (\pi^{i},\pi^{-i*}) \leq J^{i,N} (\pi^{i*}, \pi^{-i*}),  ~~~ \forall \pi^{i}\in \Pi.
\end{equation*}
\end{definition}
\begin{definition}
For $\epsilon \geq 0$, an $\epsilon$-Nash equilibrium is a tuple $(\pi^{1*},\ldots, \pi^{N*})$ such that, for all $i =1,\ldots, N$, 
\begin{equation} \label{eqn:eps-Nash-def}
 J^{i,N} (\pi^{i},\pi^{-i*})\leq J^{i,N} (\pi^{i*}, \pi^{-i*}) +\epsilon, \quad ~ \forall \pi^{i}\in \Pi.
\end{equation}
\end{definition}
\noindent In other words, any unilateral deviation from an $\epsilon$-NE can improve an agent's performance by at most $\epsilon$. 

In this work, we further restrict our attention to identical policies for all agents. 

\begin{assumption}\label{assmpt:identical-policy}
  For all $i,j\in\{1,\ldots,N\}$,   $\pi^i = \pi^j$.
\end{assumption}
\noindent 
Assumption~\ref{assmpt:identical-policy} leads, in general, to a loss in performance~\cite{arabneydi2014team}. 
However, identical policy is a standard assumption in the literature of large scale systems for reasons of simplicity and robustness~\cite{shi2012survey}.
In light of Assumption~\ref{assmpt:identical-policy}, henceforth, we will drop the superscripts on the policies and denote the policy used by all agents as $\pi$.

\section{Mean Field Approximation}\label{sec:mean-field}
As $N$ approaches infinity, the limiting game constitutes the mean field game. 
The mean field is defined as the empirical distribution of the infinite population.
We denote the mean field at time $t$ as $\mu_t$.
Aside from describing the infinite population, the introduction of the mean field also has attractive computational benefits.
Recall that the empirical distribution in~\eqref{eqn:empirical-dist} is a \textit{random} vector. 
To properly evaluate the expected reward with the nonlinear function $\Theta_t$ in \eqref{eqn:rewards}, one needs the distribution of $\mu_t^N$ at each time step. 
In general, the propagation of the distribution of $\mu_t^N$ is  computationally expensive. 
On the other hand, under the identical policy $\pi$ used by all  (infinite in number) agents, the trajectory of the mean field is deterministic~\cite{anahtarci2020q}.
Furthermore, $\mu_t$ follows a simple propagation rule:
\begin{equation}\label{eqn:mean-field-prop}
    \mu_{t+1} = \mu_t \left[\T(\pi_t)\right],
\end{equation}
where $\left[\T(\pi_t)\right]$ is a right stochastic matrix constructed based on~\eqref{eqn:policy-induced-transition}.
We refer to the time sequence $\mu= \{\mu_t\}_{t=0}^H \in \M$ as the \textit{mean field flow}.

It is tempting to approximate the empirical distribution of a finite $N$-population with the mean field. 
Indeed, as we will show later, the empirical distribution converges to the mean field as the number of agents approaches infinity.

\subsection{Representative Agent}

Before tackling the large population game with $N$ agents, we consider the limiting infinite population case by specifying the behaviour of the representative agent. 
Since the effect of dynamic uncertainties on all agents takes the same form, the mean field flow $\mu$ can be solely generated from the representative agent dynamics and its policy.
Assuming that the mean field flow $\mu$ is known and fixed, this yields
a standard Markov Decision Process (MDP) $\langle \S, \A, \T, \R_{\mu}\rangle$.
The state space, action space and the transitions of the induced MDP come directly from the original game. 
The reward induced by the mean field $\mu$ is given by
\begin{equation}\label{eqn:induced-MDP-reward}
    \R_{\mu,t} (s,a) = \Theta_t\big(L_t(s,a,\mu_t) \big).
\end{equation}
The representative agent can then maximize its expected cumulative reward given the mean field flow $\mu$ as follows: 
\begin{equation}\label{eqn:mf-cumulative-reward}
\begin{aligned}
J_{\mu}(\pi^*) = & \max_{\pi\in\Pi} \; \mathbb{E}_{\mu_0}\Bigg[\sum_{t=0}^{H} \R_{\mu,t}(s_t, a_t) \Bigg].
\end{aligned}
\end{equation}
\noindent 
Note that the optimal policy depends on $\mu$. 
We use the operator $\B_\opt :\M \to \Pi$ to denote the mapping from the mean field flow to an optimal policy of the 
induced MDP:\footnote{In general, $\B_\opt$ is a set-valued function, since the optimal policy of an MDP need not be unique.} 
\begin{equation}\label{eqn:greedy-mdp-opt}
    \pi^{*} = \B_\opt \left(\mu\right).
\end{equation}

When all agents employ the policy $\pi$ of the representative agent, 
a new mean field flow $\mu$ is induced and can be propagated via~\eqref{eqn:mean-field-prop} starting from $\mu_0$. 
We use the operator $\B_\prop : \Pi \to \M$ to denote this propagation. That is,
\begin{equation}
    \mu = \B_\prop \left(\pi \right).
\end{equation}

The mean-field equilibrium (MFE) of a mean-field game is defined as a consistent pair $(\pi^*, \mu^*) \in \Pi \times \M$ such that
\begin{equation}\label{eqn:consistent-condition}
         \pi^* = \B_\opt \big( \mu^* \big), \quad  \mu^* = \B_\prop(\pi^*).
\end{equation}
The existence of such consistent pair can be established through a Brouwer's fixed point argument~\cite{saldi2018markov}. 

One may attempt to use fixed-point iterations to find a solution to~\eqref{eqn:consistent-condition}. 
Unfortunately, the composed mean-field equilibrium (MFE) operator $\Gamma =\B_\prop \circ \B_\opt$ is only non-expansive and not a contraction, in general. 
\edit{The non-contractiveness of the MFE operator $\Gamma$ comes from the \textit{hard} maximization within the MDP optimization $\B_\opt$, where slightly different induced rewards may lead to significantly different optimal policies. 
Consequently, even if two initial mean-field flows are close, they may induce totally different optimal policies, which then leads to two new mean-field flows that are far apart. 
}
For a more detailed discussion, one may refer to~\cite{cui2021approximately}.

\section{Entropy-Regularized Mean Field Games} \label{sec:ER-MFG}

Entropy regularization is a technique used extensively to stabilize learning algorithms and to reduce the maximization bias~\cite{Fox:2015}.
The extra entropy cost introduced to the reward structure prevents abrupt policy changes between iterations. 

To address the issue of non-contractiveness of the operator $\Gamma$, we introduce an entropy-regularization term and replace the hard maximization in $\B_\opt$ with a soft maximization. 
With the soft maximization, a small change in the mean field flow will not result in an abrupt change in the optimal policy of the representative agent, thus inducing a \edit{contractive} MFE operator. 

Given a reference policy  $\rho \in \Pi$ such that $\rho_t(a|s)>0$ for all $t \leq H, s\in \S$, and $a\in\A$,
we introduce an entropy regularization term to~\eqref{eqn:mf-cumulative-reward} as follows:
\begin{align} \label{eqn:soft-cumulative-reward}
J^\KL_{\mu}&(\pi;\rho)
= \mathbb{E}\Bigg[\sum_{t=0}^{H} \Big( \R_{\mu,t}(s_t, a_t)
- \frac{1}{\beta} \log \frac{\pi_t(a_t|s_t)}{\rho_t(a_t|s_t)}\Big) \Bigg], \nonumber
\end{align}
where $\beta >0$ is the inverse temperature, and it is a design parameter.
The reference policy can encode any preference one has about the population behavior.\footnote{
When no information encoding is needed, one can use a uniform prior. }
When $\beta$ is small, the regularization term 
is dominant, and $\pi$ approaches the reference~$\rho$. 
When $\beta$ is large, the agent is allowed to diverge from the reference policy to increase the collected rewards.
As a consequence, the optimal $\pi$ approaches a greedy policy produced by $\B_\opt$ as in~\eqref{eqn:greedy-mdp-opt}.

\subsection{Optimization for the Regularized MDP}

It can be shown that the \textit{unique} optimal policy that solves $\max_{\pi} J^\KL_{\mu} (\pi;\rho)$ is given by the following (weighted) Boltzmann distribution~\cite{Fox:2015}  
\begin{equation}\label{eqn:soft-policy}
    \pi^\KL_{\mu, t}(a|s) = \frac{1}{Z_t (s)} \rho_t(a|s) \exp\left[\beta Q^\KL_{\mu,t} (s,a;\rho)\right],
\end{equation}
where $Z_t(s)$ is a normalization factor, and
the entropy-regularized state-action value function $Q^{\KL}_{\mu,t}$ is computed as:
\begin{align*}
    & Q^\KL_{\mu,t} (s,a; \rho) = \R_{\mu,t}(s,a)     
     \\
    & \resizebox{\hsize}{!}{
    $+ \sum_{s'} \T(s'|s,a)  \Big(\frac{1}{\beta} \log \sum_{a} \rho_t(a|s)\exp \left[\beta Q^\KL_{\mu,t+1} (s,a; \rho)\right]\Big)$,}
\end{align*}
with the boundary condition $Q^\KL_{\mu,H}(s,a;\rho) = \R_{\mu,H} (s,a)$.

As the entropy-regularized optimal policy is induced by the given mean field, we denote the operation performed in~\eqref{eqn:soft-policy} using the operator $\B^\KL_{\opt,\beta} :\M \to \Pi$.
We can then define the entropy-regularized MFE (ER-MFE) operator $\Gamma^{\KL}_\beta : \M \rightarrow \M$
as 
\begin{equation}\label{eqn:ER-MFE-operator}
    \Gamma^{\KL}_\beta = \B_\prop \circ \B^\KL_{\opt,\beta},
\end{equation}
The regularized equilibrium is then defined as follows.
\begin{definition}
The entropy-regularized mean field equilibrium (ER-MFE) is a consistent pair $(\pi^{\KL*}, \mu^{\KL*}) \in \Pi \times \M$ such that  $\pi^{\KL*} = \B^{\KL}_{\opt,\beta}(\mu^{\KL*})$ and $\mu^{\KL*} = \B_{\prop}(\pi^{\KL*})$.
\end{definition}

In the sequel, we establish the existence and uniqueness of the ER-MFE.
The main goal is to show that the ER-MFE operator is a contraction if $\beta$ is selected properly.
The following derivation is more direct and easier to demonstrate than the one reported in~\cite{cui2021approximately}.
The reason is that we are restricting ourselves to the case where the agents have decoupled dynamics. 
As a consequence, a single agent deviation from the optimal policy does not directly impact the distribution of the rest of the population.

\subsection{Convergence Analysis}

We first establish the Lipschitz continuity of the operators $\B_\prop$ and $\B^{\KL}_{\opt,\beta}$. 
\begin{lemma}\label{lmm:B-prop}
For all $\pi, \pi' \in \Pi$, we have that
\begin{equation}
    \dM(\B_\prop(\pi), \B_\prop(\pi')) \leq K_\prop \; \dPi(\pi, \pi'),
\end{equation}
where
\begin{equation}
    K_\prop = \frac{|S|(|S|^H -1)}{|S|-1}.
\end{equation}
\end{lemma}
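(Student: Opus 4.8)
The plan is to bound the time-$t$ total-variation error between the two induced mean-field flows by a recursion in $t$ and then unroll it. Fix $\pi,\pi'\in\Pi$ and set $\mu=\B_\prop(\pi)$ and $\mu'=\B_\prop(\pi')$. By construction both flows start from the common initial distribution, so $\mu_0=\mu'_0$ and hence $\dTV(\mu_0,\mu'_0)=0$, and both evolve by the propagation rule~\eqref{eqn:mean-field-prop}, i.e. $\mu_{t+1}(s')=\sum_{s}\mu_t(s)\,\T(s'|s,\pi_t)$ and likewise for $\mu'$. Writing $d_t\defeq\dTV(\mu_t,\mu'_t)$, the goal reduces to controlling $d_{t+1}$ in terms of $d_t$ and $\dPi(\pi,\pi')$.

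The key step is a single add-and-subtract decomposition of the one-step difference. For each $s'\in\S$ one has $\mu_{t+1}(s')-\mu'_{t+1}(s') = \sum_{s}\big(\mu_t(s)-\mu'_t(s)\big)\T(s'|s,\pi_t) + \sum_{s}\mu'_t(s)\big(\T(s'|s,\pi_t)-\T(s'|s,\pi'_t)\big)$, where the first sum carries the propagated distribution error and the second the policy mismatch at time $t$. For the mismatch I would expand $\T(s'|s,\pi_t)-\T(s'|s,\pi'_t)=\sum_a \T(s'|s,a)\big(\pi_t(a|s)-\pi'_t(a|s)\big)$ exactly as in~\eqref{eqn:policy-induced-transition} and bound $\T(s'|s,a)\le 1$, so that $|\T(s'|s,\pi_t)-\T(s'|s,\pi'_t)|\le \sum_a|\pi_t(a|s)-\pi'_t(a|s)| = 2\,\dTV(\pi_t(s),\pi'_t(s))\le 2\,\dPi(\pi,\pi')$; this mirrors the linear-functional estimate in the proof of Lemma~\ref{lmm:reward-continuity}. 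Using $\T(s'|s,\pi_t)\le 1$ in the first sum and $\sum_s\mu'_t(s)=1$ in the second gives the per-state bound $|\mu_{t+1}(s')-\mu'_{t+1}(s')|\le \|\mu_t-\mu'_t\|_1 + 2\,\dPi(\pi,\pi') = 2\,d_t + 2\,\dPi(\pi,\pi')$. Summing over the $|\S|$ states $s'$ and dividing by two yields the recursion $d_{t+1}\le |\S|\big(d_t+\dPi(\pi,\pi')\big)$.

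Finally I would unroll from $d_0=0$: a straightforward induction gives $d_t\le\big(\sum_{k=1}^{t}|\S|^k\big)\dPi(\pi,\pi')$, and maximizing over $t\le H$ gives $\dM(\mu,\mu')=\max_{t\le H}d_t\le\big(\sum_{k=1}^{H}|\S|^k\big)\dPi(\pi,\pi')=\frac{|\S|(|\S|^H-1)}{|\S|-1}\,\dPi(\pi,\pi')$, which is exactly $K_\prop\,\dPi(\pi,\pi')$. The only real obstacle is the accounting of the $|\S|$ factor: it enters solely through the crude per-state estimates $\T(s'|s,\pi_t)\le 1$ and $\T(s'|s,a)\le 1$ followed by the sum over the $|\S|$ values of $s'$, and this single-step inflation is what makes the constant grow geometrically in $H$. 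I note that a sharper argument exploiting $\sum_{s'}\T(s'|s,\pi_t)=1$ (the non-expansiveness of a stochastic kernel in the $\ell_1$/total-variation metric) collapses the recursion to $d_{t+1}\le d_t+\dPi(\pi,\pi')$ and yields the tighter constant $H$; since $H\le K_\prop$, either computation establishes the stated bound.
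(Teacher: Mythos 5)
Your proof is correct and follows essentially the same route as the paper's: the same add-and-subtract decomposition of the one-step difference (with the roles of $\mu_t$ and $\mu'_t$ merely swapped), the same crude bound $\T(s'|s,a)\le 1$ yielding the recursion $d_{t+1}\le |\S|\big(d_t+\dPi(\pi,\pi')\big)$, and the same induction from $d_0=0$. Your closing observation is also valid and worth noting: exploiting $\sum_{s'}\T(s'|s,a)=1$ instead of the per-state bound gives the non-expansive recursion $d_{t+1}\le d_t+\dPi(\pi,\pi')$ and hence the strictly sharper constant $H$ in place of $K_\prop$, which the paper does not use.
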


\begin{proof}
    See the appendix.
\end{proof}

The following two Lemmas are adopted from~\cite{cui2021approximately}. 

\begin{lemma}[\cite{cui2021approximately}]
Under Assumptions~\ref{assmpt:reward-lip} and~\ref{assmpt:reward-bound}, the entropy-regularized Q-function $Q^\KL_{\mu}$ is Lipschitz with respect to $\mu$ for arbitrary $\beta >\beta_{\max}$ and $\beta_{\max}>0$. That is,
\begin{equation*}
    \max_{t,s,a} \bigabs{Q^\KL_{\mu,t} (s,a; \rho)-Q^\KL_{\mu',t} (s,a; \rho)} \leq K^\KL_Q \;\dM\left(\mu,\mu'\right),
\end{equation*}
where
 $K^\KL_Q = \max_{t\leq H} K^\KL_{Q,t}$,
and $K^\KL_{Q,t}$ is defined as 
\begin{equation*}
\resizebox{\hsize}{!}{
    $K^\KL_{Q,t} = 2K_\Theta L_{\max} + \frac{\rho_{\max} \exp\left(2 \beta_{\max} (H+1)\R_{\max} K^\KL_{Q,t+1}\right)}{\rho_{\min}}$,}
\end{equation*}
with boundary condition $K^\KL_{Q,H} = 2K_\Theta L_{\max}$, and  
$\rho_{\max} =\max_{t,s,a}\rho_t(a|s) >0$, $\rho_{\min} = \min_{t,s,a}\rho_t(a|s) >0$.
\end{lemma}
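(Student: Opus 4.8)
The plan is to prove the estimate by backward induction on the time index $t$, from the terminal stage $H$ down to $0$, using the Lipschitz bound at $t+1$ as the inductive hypothesis. The base case $t=H$ is immediate: since $Q^\KL_{\mu,H}(s,a;\rho)=\R_{\mu,H}(s,a)$, Lemma~\ref{lmm:reward-continuity} together with $\dTV(\mu_H,\mu'_H)\le\dM(\mu,\mu')$ gives $\max_{s,a}\bigabs{Q^\KL_{\mu,H}-Q^\KL_{\mu',H}}\le 2K_\Theta L_{\max}\,\dM(\mu,\mu')$, which is exactly the boundary constant $K^\KL_{Q,H}=2K_\Theta L_{\max}$.

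Before the inductive step I would first record a uniform magnitude bound on the Q-function, which is what confines the exponentials to a compact set later. Writing the soft value as $V^\KL_{\mu,t+1}(s')=\tfrac1\beta\log\sum_{a'}\rho_{t+1}(a'|s')\exp[\beta Q^\KL_{\mu,t+1}(s',a';\rho)]$ and using that $\rho_{t+1}(\cdot|s')$ sums to one, this weighted log-sum-exp is sandwiched between $\min_{a'}Q^\KL_{\mu,t+1}(s',\cdot)$ and $\max_{a'}Q^\KL_{\mu,t+1}(s',\cdot)$, so $\abs{V^\KL_{\mu,t+1}}\le\|Q^\KL_{\mu,t+1}\|_\infty$. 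Combined with $\abs{\R_{\mu,t}}\le\R_{\max}$ and the Bellman recursion, a short backward induction yields $\|Q^\KL_{\mu,t}\|_\infty\le(H+1)\R_{\max}$.

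For the inductive step I would split $Q^\KL_{\mu,t}(s,a)-Q^\KL_{\mu',t}(s,a)$ into an immediate-reward term and a continuation term. The reward term is bounded by $2K_\Theta L_{\max}\,\dM(\mu,\mu')$ exactly as in the base case, and since $\sum_{s'}\T(s'|s,a)=1$, the continuation term reduces to $\max_{s'}\abs{V^\KL_{\mu,t+1}(s')-V^\KL_{\mu',t+1}(s')}$. Everything therefore hinges on the Lipschitz continuity, in its $Q$-arguments, of the weighted soft-max map $Q\mapsto\tfrac1\beta\log\sum_{a'}\rho\exp(\beta Q)$; this is the crux. I would bound it by composing three estimates: (i) $\exp(\beta\,\cdot)$ is Lipschitz with constant $\beta\exp(\beta(H+1)\R_{\max})$ on the range of $\beta Q$ established above; (ii) the reference-weighted sum over actions contributes a factor $\rho_{\max}$; and (iii) $\tfrac1\beta\log(\cdot)$ is Lipschitz with constant $(\beta\min g)^{-1}$, where the normalizer obeys $g=\sum_{a'}\rho\exp(\beta Q)\ge\rho_{\min}\exp(-\beta(H+1)\R_{\max})$, so that $1/\min g\le\rho_{\min}^{-1}\exp(\beta(H+1)\R_{\max})$. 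Multiplying these, the two $\beta$ factors cancel and the two exponentials combine, producing the per-step amplification $\tfrac{\rho_{\max}}{\rho_{\min}}\exp(2\beta(H+1)\R_{\max})$ acting on $\max_{s',a'}\abs{Q^\KL_{\mu,t+1}-Q^\KL_{\mu',t+1}}$.

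Finally I would invoke the inductive hypothesis $\max_{s',a'}\abs{Q^\KL_{\mu,t+1}-Q^\KL_{\mu',t+1}}\le K^\KL_{Q,t+1}\dM(\mu,\mu')$, bound $\beta$ by $\beta_{\max}$ in the exponential factor, and add the reward term; this reproduces the stated recursion for $K^\KL_{Q,t}$, and taking $K^\KL_Q=\max_{t\le H}K^\KL_{Q,t}$ gives the uniform claim. The hard part is step (iii): obtaining a usable Lipschitz constant for the logarithm requires a strictly positive lower bound on its argument, and this is exactly where the standing positivity assumption on the reference policy ($\rho_{\min}>0$) becomes indispensable---without it the soft value, and hence $Q^\KL_\mu$, would fail to be uniformly Lipschitz in $\mu$. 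I would also remark that, because the soft-max Jacobian is itself a probability vector, a sharper (exponential-free) constant is in fact available; the constant as stated simply follows from the cruder $\exp$/$\log$ estimates, which already suffice for the subsequent contraction argument.
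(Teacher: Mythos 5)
Your overall strategy---backward induction from $t=H$, a uniform magnitude bound $\norm{Q^\KL_{\mu,t}}_\infty \le (H+1)\R_{\max}$, and a per-step Lipschitz estimate for the weighted log-sum-exp obtained by composing bounds for $\exp(\beta\,\cdot)$, the $\rho$-weighted sum, and $\tfrac{1}{\beta}\log(\cdot)$---is the right skeleton, and it should be said that the paper itself offers no proof to compare against: the lemma is imported from~\cite{cui2021approximately}. Your base case, the magnitude bound, and the splitting of $Q^\KL_{\mu,t}-Q^\KL_{\mu',t}$ into a reward term (handled by Lemma~\ref{lmm:reward-continuity}) and a continuation term (handled via $\sum_{s'}\T(s'|s,a)=1$) are all correct, as is your implicit reading of the hypothesis as $\beta\le\beta_{\max}$ (the statement's ``$\beta>\beta_{\max}$'' is evidently a typo) and your closing observation that the softmax Jacobian being a probability vector yields a sharper, exponential-free constant.

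However, your final step does not do what you claim. The composition (i)--(iii) yields the recursion
\begin{equation*}
K^\KL_{Q,t} \;=\; 2K_\Theta L_{\max} \;+\; \frac{\rho_{\max}}{\rho_{\min}}\exp\big(2\beta_{\max}(H+1)\R_{\max}\big)\, K^\KL_{Q,t+1},
\end{equation*}
in which $K^\KL_{Q,t+1}$ enters as a multiplicative factor \emph{outside} the exponential, whereas the lemma's stated recursion places $K^\KL_{Q,t+1}$ \emph{inside} the exponent, $\exp\big(2\beta_{\max}(H+1)\R_{\max}K^\KL_{Q,t+1}\big)$. These are genuinely different expressions: neither dominates the other for all parameter values (for instance, when $2\beta_{\max}(H+1)\R_{\max}$ is small and $K^\KL_{Q,t+1}$ is moderate, your constant exceeds the stated one), so asserting that your bound ``reproduces the stated recursion'' does not close the induction for the constant as written. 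There is also a smaller bookkeeping error in step (ii): since the weights $\rho_t(\cdot|s)$ sum to one, the weighted sum over actions contributes a factor of $1$ (or $\rho_{\max}|\A|$ if you bound term-by-term and then sum), not $\rho_{\max}$; the inequality $\sum_{a'}\rho(a')\,\abs{\Delta(a')} \le \rho_{\max}\max_{a'}\abs{\Delta(a')}$ is false in general. In fairness, the stated recursion with the constant inside the exponent is very likely a transcription artifact of the citation, and your form is what the induction naturally produces; but as a proof of the lemma \emph{as stated}, the last step is a gap, and you should either prove your constant is an upper bound for the stated one (it is not, uniformly) or flag the discrepancy explicitly.
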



\begin{lemma}[\cite{cui2021approximately}]
Under Assumptions~\ref{assmpt:reward-lip} and~\ref{assmpt:reward-bound}, the entropy-regularized operator $\B^\KL_\opt$ is Lipschitz, that is,
\begin{equation*}
    \dPi\left(\B^\KL_{\opt,\beta}(\mu),\B^\KL_{\opt,\beta}(\mu')\right) \leq K^\KL_{\opt,\beta} \; \dM\left(\mu,\mu'\right),
\end{equation*}
where,
\begin{equation}
    K^\KL_{\opt,\beta} = \frac{|\A|(|\A|-1)  \beta \rho_{\max}^2}{2\rho_{\min}^2} K^\KL_Q.
\end{equation} 
\label{lemma:cui}
\end{lemma}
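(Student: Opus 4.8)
\emph{Proof plan.} Since $\dPi(\pi,\pi') = \max_{t\le H}\max_{s\in\S}\dTV(\pi_t(s),\pi'_t(s))$, the plan is to bound the total variation pointwise in $(t,s)$ and then take the maximum. Fix $t\le H$ and $s\in\S$, and abbreviate $\pi = \B^\KL_{\opt,\beta}(\mu)$ and $\pi'=\B^\KL_{\opt,\beta}(\mu')$. By~\eqref{eqn:soft-policy}, the two conditional distributions $\pi_t(\cdot|s)$ and $\pi'_t(\cdot|s)$ are weighted Boltzmann distributions that share the same weights $\rho_t(\cdot|s)$ but carry the logits $\beta Q^\KL_{\mu,t}(s,\cdot;\rho)$ and $\beta Q^\KL_{\mu',t}(s,\cdot;\rho)$, respectively. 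Writing $f_a = \rho_t(a|s)\exp[\beta Q^\KL_{\mu,t}(s,a;\rho)]$ and $f'_a$ analogously (with $Q^\KL_{\mu',t}$), together with $F=\sum_a f_a$ and $F'=\sum_a f'_a$, the target quantity is $\dTV(\pi_t(\cdot|s),\pi'_t(\cdot|s)) = \tfrac12\sum_a|f_a/F - f'_a/F'|$.

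The core of the argument is a Lipschitz estimate for this weighted-softmax map. I would put the difference over a common denominator,
\[
\frac{f_a}{F}-\frac{f'_a}{F'} = \frac{1}{FF'}\sum_{b}\bigl(f_a f'_b - f'_a f_b\bigr),
\]
and observe that the $b=a$ term vanishes, so that summing $|f_a/F-f'_a/F'|$ over $a$ leaves exactly $|\A|(|\A|-1)$ surviving cross terms. Each cross term equals $\rho_t(a|s)\rho_t(b|s)\bigl(e^{\beta Q_a+\beta Q'_b}-e^{\beta Q'_a+\beta Q_b}\bigr)$ in shorthand; the mean-value bound $|e^x-e^y|\le e^{\max(x,y)}|x-y|$ controls the exponential difference, while $|x-y|\le\beta|Q_a-Q'_a|+\beta|Q_b-Q'_b|\le 2\beta\max_c|Q^\KL_{\mu,t}(s,c;\rho)-Q^\KL_{\mu',t}(s,c;\rho)|$. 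Crucially, after bounding the weights by $\rho_t(a|s)\rho_t(b|s)\le\rho_{\max}^2$ in the numerator and using $F\ge\rho_{\min}\sum_c e^{\beta Q_c}$ and $F'\ge\rho_{\min}\sum_d e^{\beta Q'_d}$ in the denominator, each ratio $\rho_{\max}^2 e^{\max(x,y)}/(FF')$ collapses to $\rho_{\max}^2/\rho_{\min}^2$, because the remaining unweighted softmax factors are at most one. Collecting the $|\A|(|\A|-1)$ terms, the prefactor $\tfrac12$ from the total-variation definition, and the constant from the logit difference yields a bound of the form $|\A|(|\A|-1)\,\beta\,\rho_{\max}^2/(2\rho_{\min}^2)$ times $\max_c|Q^\KL_{\mu,t}(s,c;\rho)-Q^\KL_{\mu',t}(s,c;\rho)|$ (a tighter per-term bound via the log-mean inequality $|e^x-e^y|\le\tfrac12|x-y|(e^x+e^y)$ absorbs any stray numerical factor into the claimed constant).

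Finally, I would invoke the preceding lemma on the Lipschitz continuity of $Q^\KL_\mu$ in $\mu$ to replace $\max_c|Q^\KL_{\mu,t}(s,c;\rho)-Q^\KL_{\mu',t}(s,c;\rho)|$ by $K^\KL_Q\,\dM(\mu,\mu')$, and then maximize over $t$ and $s$ to obtain the claimed constant $K^\KL_{\opt,\beta}=|\A|(|\A|-1)\beta\rho_{\max}^2 K^\KL_Q/(2\rho_{\min}^2)$. I expect the middle step to be the main obstacle: extracting a clean $\ell_1$ (total-variation) Lipschitz constant for the weighted softmax while keeping the normalizers $F$ and $F'$ under control. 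The delicate points are ensuring that the $e^{\max(x,y)}$ factor cancels against $FF'$ so that no residual dependence on the \emph{magnitude} of the $Q$-values survives, and carefully tracking the combinatorial count of nonzero cross terms together with the reference-policy ratio---precisely the origin of the $|\A|(|\A|-1)$ and $\rho_{\max}^2/\rho_{\min}^2$ factors.
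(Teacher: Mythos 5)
The paper does not actually prove this lemma: it is imported verbatim from \cite{cui2021approximately} (``The following two Lemmas are adopted from\ldots''), so there is no in-paper argument to compare yours against, and your self-contained derivation is already more than the paper provides. Judged on its own merits, your proof is correct. The decomposition $\frac{f_a}{F}-\frac{f'_a}{F'}=\frac{1}{FF'}\sum_{b}\bigl(f_af'_b-f'_af_b\bigr)$ with the vanishing diagonal term, the logit estimate $|x-y|\le 2\beta\max_c|Q^\KL_{\mu,t}(s,c;\rho)-Q^\KL_{\mu',t}(s,c;\rho)|$, and the cancellation of the exponential against $FF'\ge\rho_{\min}^2\bigl(\sum_c e^{\beta Q_c}\bigr)\bigl(\sum_d e^{\beta Q'_d}\bigr)$ are all sound, and composing with the preceding $Q$-Lipschitz lemma and maximizing over $(t,s)$ finishes the argument. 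The one delicate point is exactly the one you flagged: the crude bound $|e^x-e^y|\le e^{\max(x,y)}|x-y|$ yields $\frac{|\A|(|\A|-1)\beta\rho_{\max}^2}{\rho_{\min}^2}K^\KL_Q$, i.e.\ twice the stated constant. Your log-mean fix does close this gap, but to land exactly on the claimed $K^\KL_{\opt,\beta}$ you must additionally observe that for $a\ne b$ the unweighted softmax cross-products satisfy $p_aq_b+p_bq_a\le(p_a+p_b)(q_a+q_b)\le 1$, where $p_c$ and $q_c$ denote the softmax probabilities built from $Q$ and $Q'$ respectively; bounding each product separately by $1$ would reintroduce the factor of $2$. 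Incidentally, summing $p_aq_b+p_bq_a$ over \emph{all} cross pairs is at most $2$, so your argument in fact proves the stronger estimate $\dTV\bigl(\pi_t(s),\pi'_t(s)\bigr)\le\beta\,\rho_{\max}^2\rho_{\min}^{-2}\,K^\KL_Q\,\dM(\mu,\mu')$ with no $|\A|$ dependence at all --- the cited constant is far from tight, which is worth a remark if you write this up.
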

The Lipschitz continuity in Lemma \ref{lemma:cui} guarantees that a small change in the mean field can only result in a small change in the optimal policy.
With the Lipschitz constants of $\B_\prop$ and $\B_{\opt,\beta}^{\KL}$, we arrive at the following result regarding the selection of $\beta$ to ensure that $\Gamma^{\KL}_\beta$ is a contraction.

\begin{theorem}
The entropy-regularized mean-field equilibrium (ER-MFE) operator $\Gamma_\beta^\KL = \B_\prop \circ \B^\KL_{\opt,\beta}$ is a contraction for 
\begin{equation} \label{eqn:beta:con}
    \beta < \min \Bigg\{\beta_{\max}, \frac{2 \rho_{\min}^2}{\rho_{\max}^2 |\A|(|\A|-1) } \frac{1}{K^\KL_Q K_\prop}\Bigg\}.
\end{equation}
\end{theorem}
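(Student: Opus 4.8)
The plan is to show that $\Gamma^{\KL}_\beta = \B_\prop \circ \B^\KL_{\opt,\beta}$ is a contraction by composing the two Lipschitz estimates already established and then imposing that the product of the two Lipschitz constants be strictly less than one. Concretely, for any two mean-field flows $\mu, \mu' \in \M$, I would chain the bounds: first apply Lemma~\ref{lmm:B-prop} with $\pi = \B^\KL_{\opt,\beta}(\mu)$ and $\pi' = \B^\KL_{\opt,\beta}(\mu')$, which gives
\begin{equation*}
    \dM\big(\Gamma^\KL_\beta(\mu), \Gamma^\KL_\beta(\mu')\big) \leq K_\prop \; \dPi\big(\B^\KL_{\opt,\beta}(\mu), \B^\KL_{\opt,\beta}(\mu')\big),
\end{equation*}
and then apply Lemma~\ref{lemma:cui} to the right-hand side to obtain
\begin{equation*}
    \dM\big(\Gamma^\KL_\beta(\mu), \Gamma^\KL_\beta(\mu')\big) \leq K_\prop \, K^\KL_{\opt,\beta} \; \dM\big(\mu,\mu'\big).
\end{equation*}
Thus the composite Lipschitz constant is $K_\prop \, K^\KL_{\opt,\beta}$, and $\Gamma^\KL_\beta$ is a contraction precisely when this product is strictly less than $1$.

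Next I would substitute the explicit form of $K^\KL_{\opt,\beta}$ from Lemma~\ref{lemma:cui}, namely $K^\KL_{\opt,\beta} = \tfrac{|\A|(|\A|-1)\beta\rho_{\max}^2}{2\rho_{\min}^2}\,K^\KL_Q$, into the contraction requirement $K_\prop \, K^\KL_{\opt,\beta} < 1$. This yields
\begin{equation*}
    \frac{|\A|(|\A|-1)\,\beta\,\rho_{\max}^2}{2\rho_{\min}^2}\,K^\KL_Q \, K_\prop < 1,
\end{equation*}
which I solve for $\beta$ to get the second term appearing inside the minimum in~\eqref{eqn:beta:con}. The $\beta_{\max}$ term in the minimum is not derived from the contraction inequality itself but is carried over as a standing hypothesis: the Lipschitz bound on $Q^\KL_\mu$ (and hence the constant $K^\KL_Q$) is only asserted to hold for $\beta < \beta_{\max}$, so any admissible $\beta$ must respect that constraint as well. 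Taking the minimum of the two bounds guarantees both that $K^\KL_Q$ is a valid Lipschitz constant and that the product of the Lipschitz constants is below one.

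There is a subtle point I would want to flag rather than gloss over: the constant $K^\KL_Q$ itself depends on $\beta$ through the recursive definition of $K^\KL_{Q,t}$, which contains the factor $\exp(2\beta_{\max}(H+1)\R_{\max} K^\KL_{Q,t+1})$. The preceding lemma handles this cleanly by bounding the $\beta$-dependence using $\beta_{\max}$ rather than $\beta$ inside the exponential, so that $K^\KL_Q$ is a fixed constant valid for all $\beta < \beta_{\max}$ rather than a moving target. Because $K^\KL_Q$ is then independent of the particular $\beta$ chosen (it depends only on $\beta_{\max}$, $H$, $\R_{\max}$, and the reference-policy bounds), the inequality $K_\prop K^\KL_{\opt,\beta} < 1$ is genuinely linear in $\beta$, and solving for $\beta$ is immediate. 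The main obstacle in making the argument airtight is therefore not the composition step, which is routine, but confirming that the admissible range for $\beta$ is nonempty — i.e., that the second term in the minimum is positive and that one can consistently pick $\beta$ small enough to satisfy both constraints simultaneously. Once existence of such a $\beta$ is noted, the contraction property follows, and the Banach fixed-point theorem (invoking that $(\M, \dM)$ is complete, as established earlier) delivers the existence and uniqueness of the ER-MFE as a corollary.
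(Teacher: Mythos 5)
Your proposal is correct and follows essentially the same route as the paper's own (much terser) proof: compose the Lipschitz bounds of Lemma~\ref{lmm:B-prop} and Lemma~\ref{lemma:cui}, and observe that the stated range of $\beta$ forces $K_\prop \, K^\KL_{\opt,\beta} < 1$, with the $\beta_{\max}$ term carried along so that $K^\KL_Q$ remains a valid ($\beta$-independent) Lipschitz constant. Your additional remarks on the linearity in $\beta$ and on nonemptiness of the admissible range are sound elaborations of what the paper leaves implicit.
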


\begin{proof}
The $\beta$ selection in~(\ref{eqn:beta:con})
guarantees that ${K_\prop  K^\KL_{\opt,\beta} < 1}$.
Consequently, the ER-MFE operator $\Gamma_\beta^\KL$ is a contraction.
\end{proof}

\subsection{Error Bounds on the Mean Field Approximations}\label{sec:bounds}
\edit{With the theoretical results established for the infinite-population equilibrium, we now examine the performance guarantee for applying the infinite-population equilibrium to a finite $N$-agent system.}
We first present the following lemma, which characterizes the asymptotic convergence of the empirical distribution flow $\mu^N$ to the mean field flow $\mu$. 

\begin{lemma}\label{lmm:mean-field-bound}
Suppose a mean field flow $\mu$ (infinite population)  is induced by the policy $\pi$.
Let $\mu^N$ denote the empirical distribution of an $N$-agent system, where all agents deploy the same policy $\pi$. 
Then, for all $t \leq H$,
\begin{equation}\label{eqn:mean-field-bound}
    \mathbb{E} \left[\dTV\left(\mu^N_t ,\mu_t\right)\right] = \mathcal{O}\Big(\frac{1}{\sqrt{N}}\Big).
\end{equation}
\end{lemma}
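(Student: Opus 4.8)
The plan is to exploit the decoupled dynamics of~\eqref{eqn:n-processes} together with the identical-policy assumption to show that, at every time $t\le H$, the agent states $s_t^1,\dots,s_t^N$ are i.i.d.\ with a common marginal law that is \emph{exactly} the mean field $\mu_t$. Once this structural fact is in hand, the empirical distribution $\mu_t^N(s)$ is, for each fixed $s\in\S$, a sample average of $N$ i.i.d.\ bounded indicators whose mean is $\mu_t(s)$, and the claim collapses to a standard $\sqrt N$ concentration estimate summed over the finite state space.

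First I would establish the i.i.d.\ claim by induction on $t$. At $t=0$ the states are i.i.d.\ draws from $\mu_0$ by~\eqref{eqn:n-processes}. For the inductive step, suppose $s_t^1,\dots,s_t^N$ are i.i.d.\ with law $\mu_t$. Because each transition in~\eqref{eqn:n-processes} uses independent randomness per agent and depends only on that agent's own state through the shared kernel $\T(\cdot\,|\,\cdot,\pi_t)$, the variables $s_{t+1}^1,\dots,s_{t+1}^N$ are again independent, and each has marginal law $\mu_t[\T(\pi_t)]$. By the mean-field propagation rule~\eqref{eqn:mean-field-prop}, this marginal is precisely $\mu_{t+1}$, which closes the induction and simultaneously pins the common law to the mean field.

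With independence in hand, fix $t\le H$ and $s\in\S$. Then $N\mu_t^N(s)=\sum_{k=1}^{N}\mathds{1}_s(s_t^k)$ is a sum of i.i.d.\ Bernoulli variables with mean $\mu_t(s)$, so $\E[\mu_t^N(s)]=\mu_t(s)$ and $\mathrm{Var}\big(\mu_t^N(s)\big)=\mu_t(s)\big(1-\mu_t(s)\big)/N\le 1/(4N)$. Jensen's inequality (equivalently, Cauchy--Schwarz) then gives $\E\big|\mu_t^N(s)-\mu_t(s)\big|\le\sqrt{\mathrm{Var}(\mu_t^N(s))}\le 1/(2\sqrt N)$. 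Summing over the finite state space and invoking the definition of total variation,
\begin{equation*}
    \E\big[\dTV(\mu_t^N,\mu_t)\big]=\tfrac12\sum_{s\in\S}\E\big|\mu_t^N(s)-\mu_t(s)\big|\le\frac{|\S|}{4\sqrt N},
\end{equation*}
which is $\mathcal{O}(1/\sqrt N)$ uniformly in $t\le H$ since the horizon is finite.

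The only substantive step is the independence claim: it is what converts the coupled $N$-agent system into $N$ genuinely independent Markov chains, and it is the precise point at which the decoupled-dynamics assumption is used---note that the reward coupling in~\eqref{eqn:rewards} never enters the \emph{law} of the trajectories, only their evaluation, so it plays no role here. Everything after that is the elementary variance bound for sample means; a sharper constant (replacing $|\S|/4$ by the Cauchy--Schwarz bound $\tfrac12\sqrt{|\S|-1}$ on $\sum_{s}\sqrt{\mu_t(s)(1-\mu_t(s))}$) is available but unnecessary for the stated rate.
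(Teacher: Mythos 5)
Your proposal is correct and follows essentially the same route as the paper: both reduce the claim to the fact that the indicators $\mathds{1}_s(s_t^k)$ are i.i.d.\ with mean $\mu_t(s)$ (a consequence of decoupled dynamics, identical policies, and the propagation rule~\eqref{eqn:mean-field-prop}), bound the resulting variance by $\mathcal{O}(1/N)$, apply Jensen, and sum over the finite state space. The only differences are cosmetic — you aggregate per-state $L^1$ bounds (constant $|\S|/4$) where the paper bounds the vector in $L^2$ and converts to $L^1$ via $\norm{\cdot}_1 \leq \sqrt{|\S|}\,\norm{\cdot}_2$ (constant $\tfrac{1}{2}\sqrt{|\S|}$), and your induction makes explicit the identification of the common marginal law with $\mu_t$, which the paper justifies more loosely through the strong law of large numbers.
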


\begin{proof}
See the appendix.
\end{proof}

Next, we show that the ER-MFE $\pi^*$ for the infinite population is an $\epsilon$-Nash equilibrium for the $N$-agent system. 

\begin{theorem}\label{thm:epsilon-nash}
Consider an ER-MFE $(\pi^{*}, \mu^{*})$.\footnote{We drop the superscript $\KL$ for notation simplicity. Unless specified otherwise, $\pi^*$ and $\mu^*$ refer to the entropy-regularized optimal solutions.}
Then, for all $\widetilde{\pi} \in \Pi$, we have
\begin{equation}
    J_{\KL}^{i,N}(\widetilde{\pi},\pi^*) \leq J_\KL^{i,N}(\pi^*, \pi^*) + \mathcal{O}\left( \frac{1}{\sqrt{N}}\right),
\end{equation}
where $J^{i,N}(\widetilde{\pi},\pi^*)$ is the value induced when agent $i$ deviates and applies policy $\widetilde{\pi}$, and all other agents apply policy $\pi^*$.
\end{theorem}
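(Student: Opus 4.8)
The plan is to relate the finite-population value $J_{\KL}^{i,N}$ to the infinite-population (mean-field) value through a three-step comparison. Define the mean-field value of an arbitrary policy $\pi$ played against the fixed equilibrium field $\mu^*$ as
\begin{equation*}
J_{\KL}^{\mu^*}(\pi) = \mathbb{E}\left[\sum_{t=0}^{H}\Big(\R_{\mu^*,t}(s_t,a_t) - \tfrac{1}{\beta}\log\tfrac{\pi_t(a_t|s_t)}{\rho_t(a_t|s_t)}\Big)\right],
\end{equation*}
where the representative agent starts at $\mu_0$ and follows $\pi$. Since $(\pi^*,\mu^*)$ is an ER-MFE, $\pi^*=\B^{\KL}_{\opt,\beta}(\mu^*)$ is by construction the unique maximizer of $J_{\KL}^{\mu^*}(\cdot)$, so $J_{\KL}^{\mu^*}(\widetilde{\pi}) \leq J_{\KL}^{\mu^*}(\pi^*)$ for every $\widetilde{\pi}\in\Pi$. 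Granting the uniform approximation $|J_{\KL}^{i,N}(\pi,\pi^*) - J_{\KL}^{\mu^*}(\pi)| = \mathcal{O}(1/\sqrt{N})$, the claim follows from the chain
\begin{align*}
J_{\KL}^{i,N}(\widetilde{\pi},\pi^*)
&\leq J_{\KL}^{\mu^*}(\widetilde{\pi}) + \mathcal{O}(1/\sqrt{N}) \\
&\leq J_{\KL}^{\mu^*}(\pi^*) + \mathcal{O}(1/\sqrt{N}) \\
&\leq J_{\KL}^{i,N}(\pi^*,\pi^*) + \mathcal{O}(1/\sqrt{N}).
\end{align*}

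The approximation step is where the decoupled-dynamics assumption pays off. Because agent $i$'s transition kernel $\T(\cdot|s^i_t,\pi_t)$ does not depend on the other agents, the law of the state-action trajectory $(s^i_t,a^i_t)_{t\leq H}$ generated by $\pi$ from $\mu_0$ is \emph{identical} to the law of the representative agent's trajectory $(s_t,a_t)_{t\leq H}$ in the mean-field MDP. Hence the KL/entropy contributions to the two objectives coincide in expectation and cancel, and the $\mu^*$-reward term of $J_{\KL}^{\mu^*}(\pi)$ equals $\mathbb{E}[\Theta_t(L_t(s^i_t,a^i_t,\mu^*_t))]$. The only surviving discrepancy is the mean-field argument inside the reward, and applying the reward Lipschitz bound of Lemma~\ref{lmm:reward-continuity} pointwise in $(s^i_t,a^i_t)$ gives
\begin{equation*}
\big|J_{\KL}^{i,N}(\pi,\pi^*) - J_{\KL}^{\mu^*}(\pi)\big| \leq 2K_\Theta L_{\max}\sum_{t=0}^{H}\mathbb{E}\big[\dTV(\mu^N_t,\mu^*_t)\big].
\end{equation*}

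The final and most delicate task is to bound $\mathbb{E}[\dTV(\mu^N_t,\mu^*_t)]$ when agent $i$ deviates, because Lemma~\ref{lmm:mean-field-bound} is stated only for a homogeneous population. The plan is to isolate the deviating agent: writing $\mu^{N,-i}_t$ for the empirical measure of the $N-1$ agents that all play $\pi^*$, an elementary $\ell_1$ computation shows that a single agent perturbs the empirical measure by at most $\dTV(\mu^N_t,\mu^{N,-i}_t)\leq 1/N$, while Lemma~\ref{lmm:mean-field-bound} applied to the $N-1$ conforming agents yields $\mathbb{E}[\dTV(\mu^{N,-i}_t,\mu^*_t)] = \mathcal{O}(1/\sqrt{N})$. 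The triangle inequality then gives $\mathbb{E}[\dTV(\mu^N_t,\mu^*_t)] = \mathcal{O}(1/\sqrt{N})$ uniformly over agent $i$'s policy, and summing the previous bound over the finite horizon $H$ closes the approximation for both $\widetilde{\pi}$ and $\pi^*$. The main obstacle is precisely this bookkeeping around the deviating agent—confirming that its own $\mathcal{O}(1/N)$ footprint on the shared reward field is asymptotically negligible and that the concentration estimate transfers cleanly from the $N-1$ conforming agents—after which optimality of $\pi^*$ and the horizon summation complete the argument.
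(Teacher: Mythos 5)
Your proposal is correct and follows essentially the same route as the paper: you sandwich the finite-$N$ values against the mean-field value at $\mu^*$ using the Lipschitz reward bound plus empirical-measure concentration, invoke the ER-MFE optimality of $\pi^*$ in the middle, and handle the deviating agent by isolating its $\mathcal{O}(1/N)$ footprint and applying Lemma~\ref{lmm:mean-field-bound} to the $N-1$ conforming agents --- which is exactly the content of the paper's Lemma~\ref{lmm:deviated-dist-conv}, re-derived inline. The only cosmetic difference is that you package the comparison through an explicitly named mean-field objective $J^{\mu^*}_{\KL}$ and a uniform approximation claim, whereas the paper carries the same inequalities through a single chained display.
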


Note that the $N$-agent trajectory under the optimal policy~$\pi^*$ is given by
$s^i_{t+1} \sim \T\big(\cdot| s^i_t, \pi^*_t(s^i_t)\big)$ for $i = 1,\ldots, N$.
Without loss of generality, we let agent 1 deviate and select some other policy $\widetilde{\pi}$. 
For both the optimal system and the deviated system, the agents' initial distributions are $\mu_0$.
Then, the trajectory of the deviated $N$-agent system is given by
\begin{equation}
\label{eqn:deviated-mf-dynamics}
\begin{aligned}
    \widetilde{s}^1_{t+1} &\sim \T\big(\cdot| \widetilde{s}^1_t,\widetilde{\pi}_t(\widetilde{s}^1_t)\big),\\
    \widetilde{s}^i_{t+1} &\sim \T\big(\cdot| \widetilde{s}^i_t, \pi^*_t(\widetilde{s}^i_t)\big), \text{ for } i = 2,\ldots, N.
\end{aligned}    
\end{equation}

The following lemma characterizes the convergence of the deviated empirical distribution to the optimal mean field.
\begin{lemma}\label{lmm:deviated-dist-conv}
    Let $\widetilde{\mu}^{N}$ denote the empirical distribution flow resulted from the system in~\eqref{eqn:deviated-mf-dynamics}.
    Then, for all $t \leq H$,
    \begin{equation*}
        \mathbb{E} \; \Big[\dTV\left(\tilde{\mu}^N_t,\mu^*_t\right)\Big] =\mathcal{O}\Big(\frac{1}{\sqrt{N}}\Big)
    \end{equation*}
\end{lemma}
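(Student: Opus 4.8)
The plan is to compare the deviated system against an auxiliary $N$-agent system in which agent~$1$ does \emph{not} deviate, and to exploit the fact that changing a single agent out of $N$ can perturb the empirical distribution by at most $\mathcal{O}(1/N)$. Concretely, I would introduce the \emph{non-deviated} system in which all $N$ agents follow $\pi^*$ starting from $\mu_0$, and denote its empirical distribution flow by $\mu^N$. Since every agent in this auxiliary system uses the same policy $\pi^*$, Lemma~\ref{lmm:mean-field-bound} applies verbatim and yields $\E[\dTV(\mu^N_t,\mu^*_t)] = \mathcal{O}(1/\sqrt{N})$ for all $t\leq H$, where $\mu^* = \B_\prop(\pi^*)$ is exactly the mean field induced by $\pi^*$.

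Next I would build a coupling between the deviated and the non-deviated systems. Because the dynamics are decoupled, the trajectories of agents $2,\ldots,N$ are unaffected by agent~$1$'s choice of policy; these agents follow the same policy $\pi^*$ and the same initial distribution $\mu_0$ in both systems. I would therefore drive agents $2,\ldots,N$ with identical initial states and identical transition noise in both systems, so that $\widetilde{s}^i_t = s^i_t$ for all $i \geq 2$ and all $t \leq H$ almost surely. Under this coupling the two empirical distributions differ only through agent~$1$, and for every $s\in\S$ one has $\widetilde{\mu}^N_t(s) - \mu^N_t(s) = \frac{1}{N}\big(\mathds{1}_s(\widetilde{s}^1_t) - \mathds{1}_s(s^1_t)\big)$. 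Summing over $s$ and using $\sum_{s}\abs{\mathds{1}_s(\widetilde{s}^1_t)-\mathds{1}_s(s^1_t)} \leq 2$ gives the pathwise bound $\dTV(\widetilde{\mu}^N_t, \mu^N_t) \leq 1/N$.

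Finally I would combine the two estimates through the triangle inequality for the total variation metric,
\begin{equation*}
    \dTV(\widetilde{\mu}^N_t, \mu^*_t) \leq \dTV(\widetilde{\mu}^N_t, \mu^N_t) + \dTV(\mu^N_t, \mu^*_t),
\end{equation*}
and take expectations. Since $\mu^*_t$ is deterministic, the target quantity $\E[\dTV(\widetilde{\mu}^N_t, \mu^*_t)]$ depends only on the marginal law of $\widetilde{\mu}^N_t$ and is therefore unchanged by the auxiliary coupling; this legitimizes introducing $\mu^N$ on a common probability space without altering what we are bounding. The right-hand side is then controlled by $1/N + \mathcal{O}(1/\sqrt{N}) = \mathcal{O}(1/\sqrt{N})$, which is the claim.

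I expect the only genuinely delicate point to be the justification of the coupling, namely the assertion that agent~$1$'s deviation leaves the laws of agents $2,\ldots,N$ untouched. This is precisely where the decoupled-dynamics assumption enters: the transition kernel $\T(\cdot|s,a)$ of any agent depends only on its own state and action, so the trajectories of the remaining agents can be realized identically on the same probability space in both systems. Everything else reduces to the elementary single-agent perturbation estimate above plus the direct invocation of Lemma~\ref{lmm:mean-field-bound}.
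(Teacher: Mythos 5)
Your proof is correct, but it takes a different route from the paper's. The paper never introduces an auxiliary non-deviated $N$-agent system or a coupling; instead it works directly with the deviated system, splitting $\widetilde{\mu}^N_t(s) = \frac{1}{N}\mathds{1}_s(\widetilde{s}^1_t) + \frac{1}{N}\sum_{k=2}^N \mathds{1}_s(\widetilde{s}^k_t)$, using only the fact that (by decoupled dynamics) each $\widetilde{s}^k_t$, $k\geq 2$, has the \emph{same law} as in the non-deviated system, and then renormalizing $\frac{1}{N}\sum_{k=2}^N$ to the $(N-1)$-agent empirical measure $\frac{1}{N-1}\sum_{k=2}^N$ so that Lemma~\ref{lmm:mean-field-bound} applies to a population of size $N-1$; this costs an extra correction term $\mathbb{E}\bigl|\frac{1}{N(N-1)}\sum_{k=2}^N \mathds{1}_s(\widetilde{s}^k_t)\bigr| \leq \frac{1}{N}$, after which everything is summed over $s$. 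Your version replaces this renormalization algebra with a pathwise coupling: agents $2,\ldots,N$ are realized identically in both systems, so the two empirical measures differ only through agent~$1$, giving the clean bound $\dTV(\widetilde{\mu}^N_t,\mu^N_t)\leq 1/N$ almost surely, and Lemma~\ref{lmm:mean-field-bound} is then invoked verbatim for the full $N$-agent system. What each buys: your argument avoids the $\frac{1}{N(N-1)}$ bookkeeping and yields a stronger (pathwise, not just in expectation) perturbation estimate, at the price of having to construct and justify the coupling — which you do correctly, noting both that decoupled dynamics make the joint realization legitimate and that the target expectation depends only on the marginal law of $\widetilde{\mu}^N_t$ since $\mu^*_t$ is deterministic; the paper's argument needs only equality of marginal distributions for agents $2,\ldots,N$ and never has to discuss probability spaces, at the price of slightly messier algebra. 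Both proofs rest on the same two pillars: a single agent perturbs the empirical measure by $\mathcal{O}(1/N)$, and the remaining identically-behaving agents converge to $\mu^*$ by Lemma~\ref{lmm:mean-field-bound}.
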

\begin{proof}
See the appendix.
\end{proof}

We now present a proof for Theorem~\ref{thm:epsilon-nash} using cross disturbance analysis similar to~\cite{huang2006large}.

\begin{proof}[Proof of Theorem~\ref{thm:epsilon-nash}]
%

 For the $N$-agent system, the value of agent 1 induced by its policy deviation is bounded as
 \begin{align}
     &J^{1,N}_\KL(\widetilde{\pi}, \pi^*) 
     = \mathbb{E} \sum_{t=0}^H \left[\R_t\big(\widetilde{s}^1_t,\widetilde{\pi}_t,\widetilde{\mu}_t^N\big)- \frac{1}{\beta} \log \frac{\widetilde{\pi}_t(a_t|\widetilde{s}^1_t)}{\rho_t(a_t|\widetilde{s}^1_t)}\right] \nonumber
     \\
     &\leq \mathbb{E} \sum_{t=0}^H \left[\R_t\big(\widetilde{s}^1_t,\widetilde{\pi}_t,\mu^*_t\big)- \frac{1}{\beta} \log \frac{\widetilde{\pi}_t(a_t|\widetilde{s}^1_t)}{\rho_t(a_t|\widetilde{s}^1_t)} \right]+ \mathcal{O}\Big(\frac{1}{\sqrt{N}}\Big) \label{eqn:eps-nash-2}
     \\
     & \leq \mathbb{E} \sum_{t=0}^H \left[\R_t\big(s^1_t,\pi^*_t,\mu^*_t\big) - \frac{1}{\beta} \log \frac{\pi^*_t(a_t|s^1_t)}{\rho_t(a_t|s^1_t)} \right]+ \mathcal{O}\Big(\frac{1}{\sqrt{N}}\Big) \label{eqn:eps-nash-3} \\
     & = J_\KL(\pi^*,\pi^*) + \mathcal{O}\Big(\frac{1}{\sqrt{N}}\Big). \label{eqn:eps-nash-4}
 \end{align}
 
 
 \noindent In~\eqref{eqn:eps-nash-2}, we used the convergence result in Lemma~\ref{lmm:deviated-dist-conv} and the Lipschitz continuity of $\R_t$ to replace $\widetilde{\mu}_t^N$ with $\mu^*_t$. 
 To arrive at~\eqref{eqn:eps-nash-3}, we used the optimality of $\pi^*$ in the regularized MDP induced by $\mu^*$.
 
 We have shown that the difference between the value of the deviated $N$-agent system and the optimal value of the \textit{infinite population system} is bounded by $\mathcal{O}(1/\sqrt{N})$.
Next, we show that the value of the finite $N$-agent system under the identical optimal policy $\pi^*$ for all agents is also  $\mathcal{O}(1/\sqrt{N})$-close to the optimal value of the infinite population system. 
To see this, note that
\begin{align*}
     &J^{1,N}_\KL(\pi^*, \pi^*) 
     = \mathbb{E} \sum_{t=0}^H \left[\R_t\big(s^1_t,\pi^*_t,\mu_t^N\big)- \frac{1}{\beta} \log \frac{\pi^*_t(a_t|s^1_t)}{\rho_t(a_t|s^1_t)}\right]
     \\
     &\leq \mathbb{E} \sum_{t=0}^H \left[\R_t\big(s^1_t,\pi^*_t,\mu_t^*\big)- \frac{1}{\beta} \log \frac{\pi^*_t(a_t|s^1_t)}{\rho_t(a_t|s^1_t)}\right] + \mathcal{O}\Big(\frac{1}{\sqrt{N}}\Big)
     \\
     & = J_\KL(\pi^*,\pi^*) + \mathcal{O}\Big(\frac{1}{\sqrt{N}}\Big), \label{eqn:eps-nash-4}
 \end{align*}
where the inequality is a result of the Lipschitz continuity of $\R_t$ and the convergence result in Lemma~\ref{lmm:mean-field-bound}. 
One can also lower bound $J^{1,N}_\KL(\pi^*, \pi^*)$ to obtain
  \begin{equation}\label{eqn:dist-optimal-finite-to-infinite}
      \bigabs{J_\KL^{1,N}(\pi^*, \pi^*)- J_\KL(\pi^*,\pi^*)} = \mathcal{O}\big( \frac{1}{\sqrt{N}}\big).
 \end{equation}
 
\noindent Combining~\eqref{eqn:eps-nash-4} and~\eqref{eqn:dist-optimal-finite-to-infinite}, it follows that 
 \begin{equation*}
      J_\KL^{1,N}(\widetilde{\pi}, \pi^*) \leq J_\KL^{1,N}(\pi^*, \pi^*) + \mathcal{O}\big( \frac{1}{\sqrt{N}}\big).
 \end{equation*}
Since the population is homogeneous, the same result applies to all agents,
 thus completing the proof.
\end{proof}

\section{Numerical Example}

In this section, a resource allocation problem is formulated as a mean field game to verify the previous theoretical results. 
Consider a dynamic resource allocation problem~\cite{shishika2021dynamic} over a directed graph $\langle\mathcal{S},\mathcal{E} \rangle$, shown in Fig.~\ref{fig:resource-allocation-graph}.
We use $\S$ and $\mathcal{E}$ to denote the set of nodes and edges, respectively.
A large group of agents traverse through the graph and collect the rewards assigned at the terminal time step $H$, and no running reward is assigned.
At time $H$, if an agent is at state (node)~$3$, then it receives a reward of $1.5$.
If it is at state~$4$, it receives a reward of $1$.
Otherwise, the agent receives no reward. 
At the same time, the agents are penalized for staying at a node with a high population density.
In summary, we have the following state-coupled rewards
\begin{align*}
        L_{H}(s^i, a^i, s^k) &=  \underbrace{1.5 \mathds{1}_{s_3} (s^i) + \mathds{1}_{s_4} (s^i)}_{\text{rewards at states $3$ and $4$}} ~~-
        \underbrace{\mathds{1}_{s^i} (s^k)}_{\substack{\text{penalty of sharing }\\ \text{node with agent $k$}}},\\
        L_t(s^i, a^i, s^k) &= 0, \qquad \text{ for all } t=0,\ldots, H-1.
\end{align*}
We set the nonlinear function in~\eqref{eqn:rewards} to $\Theta_{H}(x) = x^2$. 
Each agent at a state $s$ (graph node)  can choose one of the adjacent states  $s'$ to visit at the next time step.
That is, the action space $\A(s)$ at state $s$ is all the states $s'$ such that $(s,s')\in \mathcal{E}$.

\begin{figure}[h]
\vspace{-5pt}
    \centering
    \includegraphics[width=0.25\textwidth]{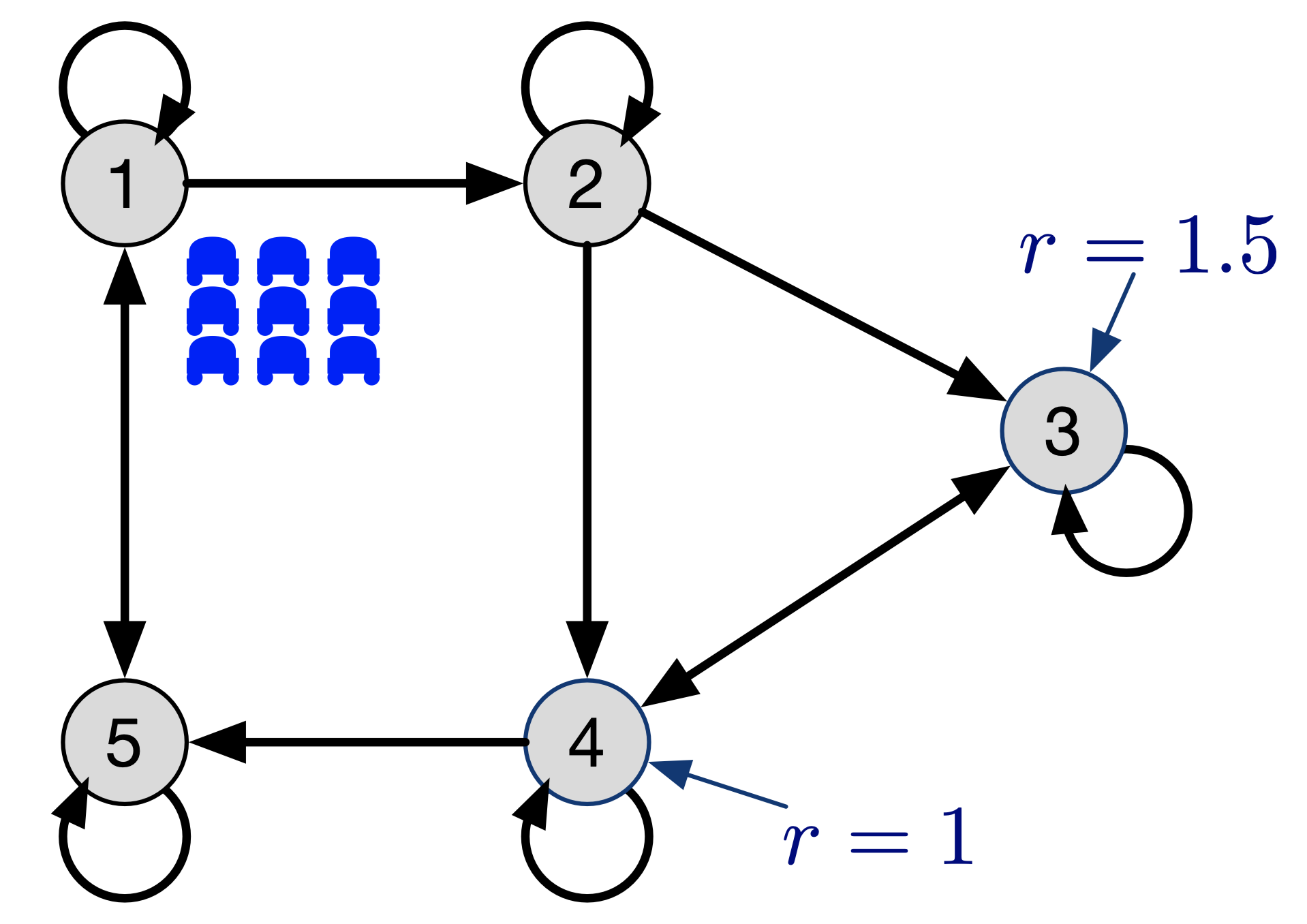}
    \vspace{-8pt}
    \caption{Graph for a resource allocation problem with self-loops.}
    \label{fig:resource-allocation-graph}
    \vspace{-3pt}
\end{figure}

If we directly use fixed-point iterations without entropy-regularization to solve this mean field game, the algorithm {fails to converge}. 
For the first iteration, the agents use a policy that concentrates the whole population at node $3$ for the extra reward.
At the next iteration, the penalty for staying at node $3$ is high given the mean field flow from the previous iteration. 
The representative agent then constructs a policy to visit node $4$.
In summary, the policy found by the $\B_\opt$ without regularization oscillates between concentrating on node $3$ and concentrating on node $4$ at $H$.

Suppose now that a coordinator can send a command to the group of agents, and it decides that both nodes~$3$ and~$4$ need to be occupied by some agents, but s/he does not have access to the actual rewards at the two nodes.
Consequently, the coordinator can, at most, provide a reference policy to \textit{guide} the agents to node $3$ and $4$, but the decision of which node is more rewarding to occupy can only be made by the agents themselves.
We constructed a reference policy $\rho$ that commands the agents to move to nodes $3$ and $4$.\footnote{%
We may set, for example, $\rho(s_4|s_2) = \rho(s_3|s_2) = 0.5$ and $\rho(s_5|s_4)=0.01$.
This reference policy promotes the agents to move from state $2$ to states $3$ and $4$, while discourages agents to move from state $4$ to state $5$.}
If the reference policy is directly applied by the agents, then the final distribution at nodes $3$ and $4$ are roughly the same, which does not reflect the difference in the rewards.

We now use the constructed reference policy to form the entropy-regularized MFG and solve the regularized game using the operator $\Gamma_\beta^\KL$ in~\eqref{eqn:ER-MFE-operator} for two different $\beta$ values.
The algorithm converges and the population distribution over the nodes is depicted in Fig.~\ref{fig:different-beta}.
Recall that a larger $\beta$ means less regularization in the reward structure. 
In Fig.~\ref{fig:different-beta}(a), with a large $\beta$, the agents chase mainly the rewards, and the reference policy from the coordinator has little effect. 
In this scenario, the agents concentrate at node $3$ up to the point when an additional number of agents at the same node would result in a penalty that diminishes the reward advantage that node $3$ has over node $4$.
In Fig.~\ref{fig:different-beta}(b), the value of $\beta$ is small and the reference policy dominates. 
The agents start to ignore the reward advantage that node $3$ has, and follow the reference policy instead.
The parameter $\beta$ enables us to generalize the behavior beyond these two extremes and to cover a continuous spectrum of population behavior. 

\begin{figure}
\vspace{+2pt}
    \centering
    \includegraphics[width=0.99\linewidth]{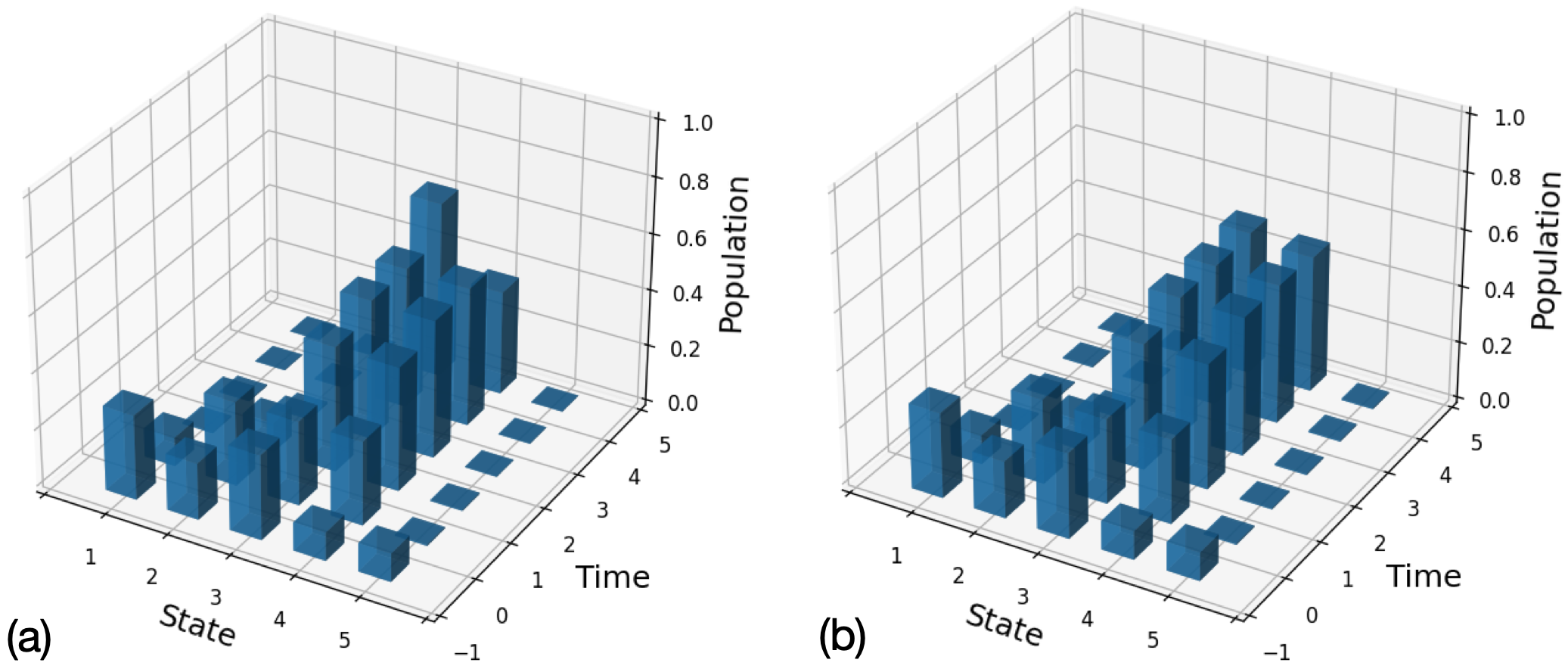}
    \vspace{-10pt}
    \caption{Population distribution over time. Scenario in (a) has the inverse temperature of $\beta=3$, and scenario in (b) has $\beta=0.1$.}
    \label{fig:different-beta}
\end{figure}

Finally, to verify Theorem~\ref{thm:epsilon-nash},
we fixed the last $N-1$ agents' policy to the ER-MFE, and we computed the distribution of the random vector $\mu^N$.
We let the first agent optimize the  entropy-regularized MDP induced by $\mu^N$.
We then compared the difference between its newly-optimized performance and the performance should the agent adopt the ER-MFE.
A log-log plot of performance gain vs. number of agents is presented in Fig.~\ref{fig:deviate}. 
The performance gain trend is bounded by the reference line with a slope of $-0.5$, which verifies our claim of the $\mathcal{O} (1/\sqrt{N})$ convergence rate.

\begin{figure}[h]
    \centering
    \vspace{-3pt}
    \includegraphics[width=0.75\linewidth]{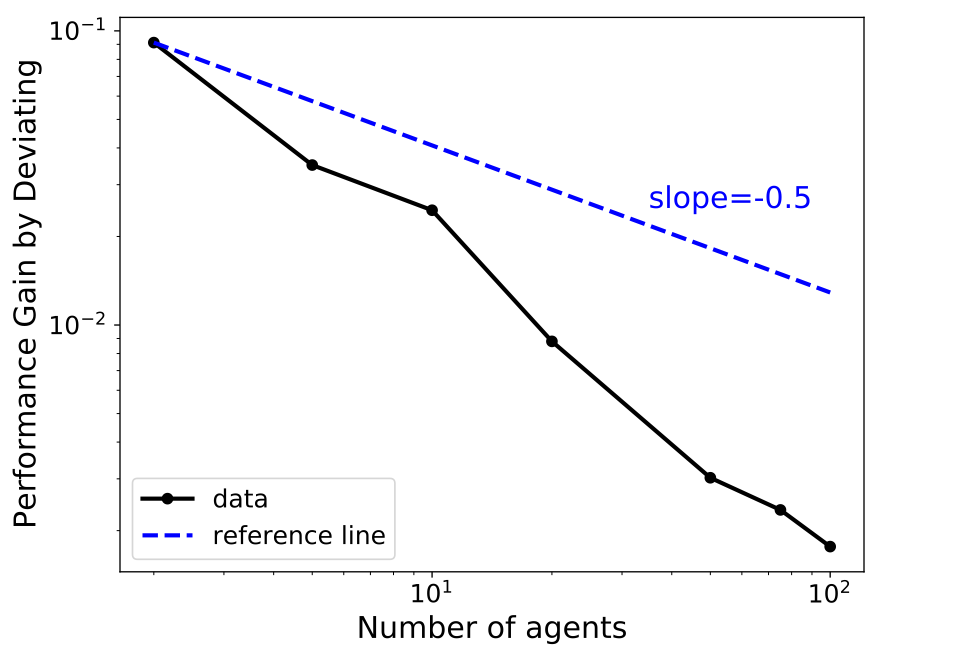}
    \vspace{-8pt}
    \caption{Log-log plot of performance gain by an agent unilaterally deviating in a finite population.}
    \label{fig:deviate}
    \vspace{-10pt}
\end{figure}

\section{Conclusion}
In this article, an entropy-regularized mean-field game with finite state-action space in a discrete time setting was formulated and analyzed.
We demonstrated that the entropy-regularization provides the regularity conditions that the standard MFG lacks. 
The conditions for a \edit{contractive} entropy-regularized mean-field equilibrium operator is presented.
Furthermore, we provided a streamlined proof for the  performance bound of the entropy-regularized MFE in a finite $N$-agent game.  
Through a resource allocation example, we verified the theoretical results and demonstrated that the reference policy for the entropy regularization can be used to control the behavior of a large population, and the parameter $\beta$ allows us to cover a continuous spectrum of population behaviors.
Future work will involve extending the approach to the case of two large teams of agents competing against each other, modeled as a zero-sum game, while the dynamics of agents within each team evolves as a mean field game.
\vspace{+25pt}


\appendix

\begin{proof}[Proof of Lemma~\ref{lmm:B-prop}]
Consider two policies $\pi,\pi' \in \Pi$ and the corresponding propagated mean field flows $\mu = \B_\prop(\pi)$ and $\mu' = \B_\prop(\pi')$.
Then, at time step $t+1$, we have
\begin{align*}
    &\bigabs{\mu_{t+1}(s') - \mu'_{t+1}(s')} \\
    & = \Big \vert \sum_{s}\mu_t(s)\T(s'|s,\pi_t)  -\sum_{s}\mu_t'(s)\T(s'|s,\pi'_t)\Big \vert\\
    &\leq \Big \vert \sum_{s}\mu_t(s)\T(s'|s,\pi_t)  -\sum_{s}\mu_t(s)\T(s'|s,\pi'_t)\Big \vert \tag{A} \label{eqn:A} \\
    &+ \Big \vert \sum_{s}\mu_t(s)\T(s'|s,\pi'_t)  -\sum_{s}\mu_t'(s)\T(s'|s,\pi'_t)\Big \vert. \tag{B}\label{eqn:B}
\end{align*}
For~\eqref{eqn:A}, we have
\begin{align*}
    \text{(A)} &= \Big \vert \sum_{s}\mu_t(s)\sum_{a}\T(s'|s,a) \big(\pi_t(a|s) - \pi_t'(a|s)\big)\Big \vert \\
    &\leq \sum_{s}\mu_t(s)\sum_{a}\T(s'|s,a) \Big \vert  \big(\pi_t(a|s) - \pi_t'(a|s)\big)\Big \vert \\
    &\leq \sum_{s}\mu_t(s)\sum_{a} \Big \vert  \big(\pi_t(a|s) - \pi_t'(a|s)\big)\Big \vert \\
    &= 2 \sum_{s}\mu_t(s)  \dTV(\pi_t(s), \pi'_t(s)) \leq 2 \dPi(\pi,\pi').
\end{align*}

For~\eqref{eqn:B}, we have
\begin{align*}
    \text{(B)} &\leq \sum_{s}\T(s'|s,\pi'_t) \Big \vert   \mu_t(s) -\mu_t'(s)\Big \vert\\
    & \leq \sum_{s} \Big \vert   \mu_t(s) -\mu_t'(s)\Big \vert = 2 \dTV(\mu_t,\mu'_t).
\end{align*}

Combining~\eqref{eqn:A} and~\eqref{eqn:B}, we have
\begin{align*}
    \dTV(\mu_{t+1}, \mu_{t+1}') &= \frac{1}{2} \sum_{s'\in \S} | \mu_{t+1}(s') - \mu'_{t+1}(s') | \\
    &\leq |\S| \big( \dPi(\pi,\pi') + \dTV(\mu_{t}, \mu_{t}') \big).
\end{align*}
For time step $t=0$, we assumed that $\mu_0 = \mu'_0$. 
Consequently, $\dTV(\mu_{0}, \mu_{0}')=0$. 
Through induction, one can show that 
\begin{equation}\label{eqn:appdx-C}
    \dTV(\mu_{t}, \mu_{t}') \leq \frac{|\S|(|\S|^t-1)}{|\S|-1}\dPi(\pi,\pi'),
\end{equation}
Since $|\S| > 1$, it follows that \eqref{eqn:appdx-C} is an increasing sequence of $t$.
Consequently, we have
\begin{align*}
    \dM(\B_\prop(\pi),&\B_\prop(\pi')) = \dM(\mu, \mu')
    = \max_{t\in T} \dTV(\mu_t, \mu_t')\\
    &\leq \frac{|\S|(|\S|^T-1)}{|\S|-1}\dPi(\pi,\pi').
\end{align*}
\end{proof}
\begin{proof}[Proof of Lemma~\ref{lmm:mean-field-bound}]
The empirical distribution is defined as
\begin{equation}
    \vspace{-3pt}
    \mu_t^N (s) = \frac{1}{N}\sum_{i=1}^N \mathds{1}_{s} (s^i_t).
    \vspace{-1pt}
\end{equation}
Let $X_s^i = \mathds{1}_{s} (s^i_t$). 
Since the dynamics are decoupled and all agents use the same policy, with the same initial distribution, the random variables $X_s^i$ are i.i.d. with mean~$\mathbb{E} \left[X_s^i \right]$.
As $\mu^N_t(s)$ is the sample mean of $X^i_s$, from the strong law of large numbers~\cite{grimmett2001probability}, we have
$
    \mu^N_t(s) \xrightarrow{a.s.}    \mathbb{E}\left[X_s^i\right] \text{ as } N \to \infty.
$
Thus, we have that the mean field satisfies
$
    \pr{\mu_t(s) - \mathbb{E}\big[X_s^i\big] \neq 0} = 0.
$
The variance of $X_s^i$ is then
$
\mathrm{Var}(X_s^i)
= \mathbb{E}\left[X_s^i\right] -\big(\mathbb{E}\left[X_s^i\right]\big)^2
=\mu_t(s) \left(1 -\mu_t(s) \right).
$
Here, we regarded $\mu_t(s)$ as a deterministic number and use the property $\mathbb{E}\big[(X_s^i)^2\big] = \mathbb{E} \big[ X_s^i\big]$ as $X_s^i$ is an indicator function. 
Furthermore, 
\begin{align*}
  \mathbb{E}\bignorm{\mu^N_t -\mu_t&}_2^2 = \mathbb{E} \sum_{s\in\S} \abs{\mu^N_t(s) -\mu_t(s)}^2 \\
  & =\mathbb{E} \sum_{s\in\S} \bigabs{\frac{1}{N} \sum_{i=1}^N \left(X_s^i - \mathbb{E}\left[X_s^i\right]\right)}^2 
  \\
  &= \frac{1}{N} \sum_{s\in\S} \mathrm{Var}(X_s^i)
  =\frac{1}{N}\sum_{s\in\S} \mu_t(s)(1-\mu_t(s)) \\
  &= \frac{1}{N} (1- \norm{\mu_t}_2^2)
  \leq \frac{1}{N}.
\end{align*}
By Jensen's inequality,  we have
$
    \mathbb{E}\bignorm{\mu^N_t -\mu_t}_2 \leq {1}/{\sqrt{N}} .
$
Since
$
    \norm{\mu^N_t- \mu_t}_1 \leq \sqrt{|\S|}\; \norm{\mu^N_t- \mu_t}_2,
$
it follows that
\begin{align*}
 \mathbb{E} [\dTV ( &\mu^N_t -\mu_t)] = \mathbb{E} \Big[\frac{1}{2} \sum_s |\mu_t^N(s) - \mu_t^*(s) | \Big]\\ 
 &= \frac{1}{2}\mathbb{E}\norm{\mu - \mu'}_1
 =\mathcal{O}\Big(\frac{1}{\sqrt{N}}\Big).
\end{align*}
\vspace{-5pt}
\end{proof}

\begin{proof}[Proof of Lemma~\ref{lmm:deviated-dist-conv}]
The deviated empirical distribution is given by
  \begin{align}
  \vspace{-5pt}
      \widetilde{\mu}_t^N(s) &= \frac{1}{N} \sum_{k=1}^N \mathds{1}_s (\tilde{s}^k_t) 
      = \frac{1}{N} \mathds{1}_s (\tilde{s}^1_t) + \frac{1}{N} \sum_{k=2}^N \mathds{1}_s (\tilde{s}^k_t). \nonumber
  \end{align}
  Due to the decoupled dynamics, $\tilde{s}^k_t$ follows the same distribution as $s^k_t$ for $k=2,\ldots, N$. 
 Then, the expected difference between the deviated empirical distribution and the original optimal mean field can be bounded as 
 \begin{align}
     \mathbb{E} &\bigabs{\tilde{\mu}^N_t(s) - \mu^*_t(s)} \nonumber\\
     &\leq \mathbb{E} \bigabs{\frac{1}{N} \mathds{1}_s (\tilde{s}^1_t)} + \mathbb{E} \bigabs{\frac{1}{N} \sum_{k=2}^N \mathds{1}_s (\tilde{s}^k_t) - \mu^*_t(s)} \nonumber
     \\
     &\leq \frac{1}{N} + \mathbb{E} \bigabs{\frac{1}{N-1} \sum_{k=2}^N \mathds{1}_s (\tilde{s}^k_t) - \mu^*_t(s)} \label{eqn:deviated-dist-2}\\
     &\qquad \qquad \qquad \qquad \qquad + \mathbb{E} \bigabs{\frac{1}{N(N-1)} \sum_{k=2}^N \mathds{1}_s (\tilde{s}^k_t)} \nonumber\\
     &\leq \frac{1}{N} + \mathcal{O}\Big(\frac{1}{\sqrt{N}}\Big) + \frac{1}{N} 
     = \mathcal{O}\Big(\frac{1}{\sqrt{N}}\Big) \label{eqn:deviated-dist-1}. 
 \end{align}
 The second term in~\eqref{eqn:deviated-dist-2} corresponds to the scenario of $N-1$ agents all applying the optimal policy $\pi^*$. 
 By Lemma~\ref{lmm:mean-field-bound}, we obtain the convergence rate. 
 %
Finally, from~\eqref{eqn:deviated-dist-1}, we have
\begin{align*} 
     &\mathbb{E} \; [\dTV\left(\tilde{\mu}^N_t,\mu^*_t\right)] = \frac{1}{2}\mathbb{E} \; 
     \sum_{s\in\S} |\tilde{\mu}_t^N(s)-\mu_t^*(s) |
     =\mathcal{O}\Big(\frac{1}{\sqrt{N}}\Big).
 \end{align*}
\end{proof}

\bibliographystyle{ieeetr}
\bibliography{references}

\end{document}